\documentclass[a4paper,11pt]{article}
\usepackage{jheppub} 
\usepackage{lineno}
\usepackage{tikz}
\usepackage{pgfplots}
\usetikzlibrary{decorations.markings, decorations.pathmorphing, shapes.geometric, arrows.meta}
\pgfplotsset{compat=1.18}
\usepackage{amsthm}
\newtheorem{proposition}{Proposition}[section]
\newtheorem{lemma}{Lemma}[section]
\newtheorem{remark}{Remark}[section]
\newtheorem{corollary}{Corollary}[section]
\newtheorem{theorem}{Theorem}[section]

\title{Open-Channel Operator Closure of the Finite-Cutoff JT Gravity Disk Amplitude}
\author{Ye Zhou}
\affiliation{College of Science and Medicine, Australian National University,\\
Canberra ACT 2601, Australia}
\emailAdd{u7993710@anu.edu.au; ye.zhou.horizon@gmail.com}

\abstract{
The finite-cutoff disk amplitude of Jackiw--Teitelboim (JT) gravity is known from closed-channel spectral methods and finite-cutoff trumpet/cap gluing, while its complete open-channel operator formulation has remained incomplete. In this paper, we provide an operator-level open-channel closure of this known result. More precisely, we separate the data imported from finite-cutoff geometry---the rigid length--momentum kernel, the disk--trumpet gluing relation, and hence the target cap overlap---from the structures derived within the parity-even auxiliary problem, namely the Neumann vacuum sector, the generalized eigenbasis, and the branch-projecting spectral functional. When these ingredients are combined, the known finite-cutoff disk amplitude is reproduced as a boundary-state matrix element. We further show that the induced finite-cutoff geodesic sector is bandlimited and therefore admits sampled and branch-doubled discrete representations of the same physical sector, rather than an independent microscopic lattice model. Finally, we show that the resulting compact-support branch-difference amplitude is not the ordinary thermal trace of any single lower-bounded self-adjoint $\beta$-independent Hamiltonian.
}

\begin{document}
\maketitle
\flushbottom

\section{Introduction}
\label{sec:intro}

Jackiw-Teitelboim (JT) gravity \cite{Teitelboim:1983ux, Jackiw:1985je} provides a solvable framework for two-dimensional quantum gravity. It captures the universal near-extremal dynamics of higher-dimensional black holes \cite{AlmheiriPolchinski:2014, Maldacena:2016upp} and is holographically dual to the low-energy sector of the Sachdev-Ye-Kitaev (SYK) model \cite{SachdevYe:1993, Kitaev:2015talks, Maldacena:2016hyu}. The boundary dynamics of the theory is governed by the Schwarzian action \cite{Maldacena:2016upp, Jensen:2016pah, Engelsoy:2016xyb}, which describes the reparametrization mode of the nearly-$AdS_2$ geometry.

The Schwarzian theory admits an exact treatment \cite{Stanford:2017thb, Mertens:2017mtv}, while JT gravity itself admits an exact quantization \cite{Iliesiu:2019xuh}. In particular, the asymptotic disk partition function and the boundary correlators have been evaluated analytically using the Plancherel measure of $SL(2,\mathbb{R})$ \cite{Iliesiu:2019xuh, Blommaert:2019Fine}. Furthermore, the non-perturbative topological expansion of JT gravity was shown to be equivalent to a double-scaled matrix integral \cite{Saad:2019lba}. 

A natural extension of the asymptotic boundary is to consider JT gravity at a finite radial cutoff \cite{Iliesiu:2020zld, Stanford:2020qhm, Griguolo:2021fcu}. From a holographic perspective, placing the boundary at a finite distance corresponds to a specific $T\bar{T}$-type Hamiltonian deformation of the dual quantum mechanics \cite{Gross:2019ach, Gross:2019uxi, Ebert:2022hjm}. In the bulk, formulating the finite-cutoff theory requires a careful treatment of local boundary conditions \cite{Goel:2020yxl} and a clear specification of the geometric observables.

Recently, a new perspective on finite-cutoff dilaton gravity was introduced by Griguolo et al.\ \cite{Griguolo:2025fin}. Using closed-channel spectral methods, the authors derived the exact finite-cutoff spectral density, which exhibits a compact support and a double-branch structure. They also formulated the finite-cutoff trumpet wavefunction and derived the local open-channel Hamiltonians governing the transverse spatial slice. However, as noted in \cite{Griguolo:2025fin}, the exact open-channel canonical formulation, including the specification of the physical endpoint conditions, the normalization of the continuous spectrum, and the operator representation required to reconstruct the disk amplitude, remained incomplete.

In this paper, we do not attempt to derive the exact finite-cutoff disk amplitude from the localized auxiliary problem alone. Our aim is narrower and more concrete: to complete the missing open-channel operator formulation of the already known finite-cutoff disk amplitude by cleanly separating imported geometric input from auxiliary-problem derivations and then showing how these ingredients close into a single boundary-state matrix element. For clarity, the logical status of the main ingredients is summarized in table~\ref{tab:input_vs_derived}.

\begin{table}[t]
\centering
\small
\setlength{\tabcolsep}{6pt}
\renewcommand{\arraystretch}{1.15}
\begin{tabular}{|p{0.24\textwidth}|p{0.68\textwidth}|}
\hline
\textbf{Logical status} & \textbf{Content in the present paper} \\
\hline
Known / imported from previous finite-cutoff JT results & The exact closed-channel disk amplitude and its compact-support double-branch structure; the rigid length--momentum kernel inherited from the finite-cutoff trumpet amplitude; the disk--trumpet gluing relation, from which the target cap overlap $\langle \mathrm{cap}|k,N\rangle = k\sinh(2\pi k)$ is read off. \\
\hline
Derived within the auxiliary problem & The pure-gravity, orientation-even vacuum endpoint sector; the Neumann self-adjoint realization; the generalized Neumann eigenbasis and its continuum normalization; the branch-resolved spectral functional implementing the compact support and the target branch difference. \\
\hline
New operator packaging / closure obtained here & The assembly of the above ingredients into an open-channel boundary-state matrix element reproducing the known finite-cutoff disk amplitude; the identification of a natural contour realization of the branch subtraction; and the fixing of the cap representative within the chosen local analytic jet class once the gluing overlap is supplied. \\
\hline
Consequences / corollaries & The bandlimited geodesic sector, its sampled Hilbert-space formulation, its branch-doubled discrete transfer representation, and the obstruction to rewriting the exact compact-support branch-difference amplitude as an ordinary thermal trace of a single lower-bounded self-adjoint $\beta$-independent Hamiltonian. \\
\hline
\end{tabular}
\caption{Imported inputs, auxiliary-problem derivations, operator-level closure, and corollaries in the present open-channel reconstruction.}
\label{tab:input_vs_derived}
\end{table}

More explicitly, the auxiliary problem fixes the parity-even vacuum sector, the Neumann generalized eigenbasis, and the branch-projecting spectral functional, while the rigid length--momentum kernel and the disk--trumpet gluing relation are supplied by the finite-cutoff trumpet geometry \cite{Griguolo:2025fin, Iliesiu:2020zld}. Once these two layers are combined, the cap functional is fixed within the local analytic jet class and the disk amplitude closes into a boundary-state matrix element. Read in this way, the paper fills in the missing operator structures on the open-channel side without blurring the distinction between local auxiliary data and geometric gluing data.

A further consequence of this reconstruction is that the physical geodesic sector becomes bandlimited. It therefore admits a Nyquist resolution scale, a sampled Hilbert-space representation, and a branch-doubled discrete transfer representation. These structures should be read as exact representations of the same finite-cutoff geodesic sector, not as the introduction of an independent microscopic lattice model. We also show that the corresponding physical branch-difference amplitude is not the ordinary thermal trace of any single lower-bounded self-adjoint $\beta$-independent Hamiltonian. In this respect, the operator framework developed here may be compared with recent discussions of physical Hilbert spaces, interior length dynamics, and related deformed or reconstructed observables \cite{Boruch:2024xgh, Blommaert:2024whf, Lin:2022rbf, Miyaji:2024eua, Iliesiu:2024npbh}.

Relative to the finite-cutoff closed/open-channel analysis of \cite{Griguolo:2025fin}, the present paper is best read as an operator-level completion of the open-channel side. The exact disk amplitude, its compact-support double-branch structure, and the geometric gluing input are taken as known finite-cutoff data, while what is fixed here is the Neumann vacuum sector of the pure-gravity, orientation-even smooth quotient, the corresponding generalized eigenbasis and normalization, the branch-resolved spectral functional, and the boundary-state packaging that reproduces the known disk answer once the gluing input is supplied. The sampled Hilbert-space formulation, the branch-doubled discrete transfer representation, and the no-single-semigroup obstruction are then consequences of this reconstructed finite-cutoff geodesic sector, rather than independent microscopic input.

The remainder of this paper is organized as follows. In section~\ref{sec:closed_target}, we review the exact closed-channel target density and its double-branch representation. In section~\ref{sec:open_localization}, we analyze the local open-channel Hamiltonian and fix the Neumann boundary condition from the pure-gravity vacuum sector. Section~\ref{sec:neumann_basis} details the construction and normalization of the Neumann generalized eigenbasis. In section~\ref{sec:cap_insertion}, we take the gluing data as geometric input to fix the cap insertion and prove its class-internal uniqueness. Section~\ref{sec:contour_projector} introduces the geometrically natural contour projector to reproduce the exact branch difference. In section~\ref{sec:reconstruction}, we reconstruct the disk amplitude and demonstrate that the finite-cutoff geodesic sector admits a sampled Hilbert-space formulation and a branch-doubled discrete transfer representation. We conclude with a discussion on further directions in section~\ref{sec:discussion}.

\section{Exact closed-channel target}
\label{sec:closed_target}

In this section, we review the exact closed-channel spectral density of finite-cutoff JT gravity and recast the exact disk partition function into a double-branch representation. This formulation serves as the exact target for our open-channel reconstruction.

Following the closed-channel spectral analysis of dilaton gravity at a finite radial cutoff $\varepsilon$ \cite{Iliesiu:2020zld, Griguolo:2025fin}, the exact spectral density is given by
\begin{equation}
\label{eq:exact_density_E}
\rho(E; \phi_r, \varepsilon) = \phi_r \left( 1 - \frac{\varepsilon^2}{\phi_r}E \right) \sinh\left( 2\pi\sqrt{\phi_r E\left(2-\frac{\varepsilon^2}{\phi_r}E\right)} \right) \,,
\end{equation}
where $\phi_r$ is the renormalized boundary dilaton value. A defining feature of this exact density is its compact spectral support \cite{Griguolo:2025fin}, restricted to the finite band
\begin{equation}
\label{eq:compact_support}
E \in \left[ 0, \frac{2\phi_r}{\varepsilon^2} \right] \,.
\end{equation}
Consequently, the exact finite-cutoff disk partition function is evaluated as a proper integral over this finite domain:
\begin{equation}
\label{eq:Z_disk_E}
Z_\varepsilon(\beta) = \int_0^{\frac{2\phi_r}{\varepsilon^2}} dE \, \rho(E; \phi_r, \varepsilon) e^{-\beta E} \,.
\end{equation}

To relate this closed-channel exact result to the open-channel dynamics, we perform a change of variables from the boundary energy $E$ to a momentum-like variable $k$. Motivated by the argument of the hyperbolic sine function in \eqref{eq:exact_density_E}, we define $k$ through the quadratic relation:
\begin{equation}
\label{eq:k_E_relation}
k^2 = 2\phi_r E - \varepsilon^2 E^2 \,.
\end{equation}
To determine the associated integration measure, we differentiate both sides of \eqref{eq:k_E_relation} with respect to $E$:
\begin{equation}
\label{eq:diff_k_E}
2k \frac{dk}{dE} = 2\phi_r - 2\varepsilon^2 E \,.
\end{equation}
Dividing by $2$ and rearranging terms, we isolate the differential relation:
\begin{equation}
\label{eq:jacobian_relation}
k \, dk = (\phi_r - \varepsilon^2 E) \, dE = \phi_r \left( 1 - \frac{\varepsilon^2}{\phi_r}E \right) dE \,.
\end{equation}
The prefactor in the exact spectral density \eqref{eq:exact_density_E} is precisely absorbed by this Jacobian. Substituting \eqref{eq:k_E_relation} and \eqref{eq:jacobian_relation} into the density measure yields the standard Plancherel measure of infinite-cutoff JT gravity \cite{Iliesiu:2019xuh}:
\begin{equation}
\label{eq:measure_substitution}
\rho(E; \phi_r, \varepsilon) \, dE = k \sinh(2\pi k) \, dk \,.
\end{equation}

Next, we invert the relation \eqref{eq:k_E_relation} to express the energy $E$ as a function of $k$. Rewriting \eqref{eq:k_E_relation} as a standard quadratic equation for $E$,
\begin{equation}
\label{eq:quadratic_E}
\varepsilon^2 E^2 - 2\phi_r E + k^2 = 0 \,,
\end{equation}
we obtain two distinct branches of solutions, denoted as $E_\pm(k)$:
\begin{equation}
\label{eq:E_pm_solutions}
E_\pm(k) = \frac{2\phi_r \pm \sqrt{4\phi_r^2 - 4\varepsilon^2 k^2}}{2\varepsilon^2} = \frac{\phi_r \pm \sqrt{\phi_r^2 - \varepsilon^2 k^2}}{\varepsilon^2} \,.
\end{equation}
To ensure that the energy branches remain real-valued, the discriminant must be non-negative ($\phi_r^2 - \varepsilon^2 k^2 \ge 0$). This imposes an upper bound on $k$. Defining the characteristic momentum scale $R$ as
\begin{equation}
\label{eq:R_definition}
R := \frac{\phi_r}{\varepsilon} \,,
\end{equation}
the physical domain for the momentum variable is restricted to $0 \le k \le R$. In particular, as $\varepsilon \to 0$, one has $R \to \infty$, so the finite momentum band opens up to the standard unbounded JT continuum. The existence of this maximum momentum $R$ is a direct consequence of the finite radial cutoff $\varepsilon$. The two branches meet at the branch-center energy
\begin{equation}
\mathcal{E}_0 := \frac{\phi_r}{\varepsilon^2} \,.
\end{equation}
The geometric structure of these two branches merging at the band edge is illustrated in figure~\ref{fig:spectral_curve}.

\begin{figure}[htbp]
\centering
\begin{tikzpicture}
\begin{axis}[
    axis lines=middle,
    xlabel={$k$}, ylabel={$E$},
    xmin=-0.1, xmax=1.5,
    ymin=-0.2, ymax=2.2,
    xtick={1}, xticklabels={$R$},
    ytick={0, 1, 2}, yticklabels={0, $\mathcal{E}_0$, $2\mathcal{E}_0$},
    every axis x label/.style={at={(ticklabel* cs:1.05)}, anchor=west},
    every axis y label/.style={at={(ticklabel* cs:1.05)}, anchor=south},
    samples=100,
    width=8cm, height=6cm
]
\addplot[blue, thick, domain=0:1] {1 - sqrt(1-x^2)} node[pos=0.6, below right] {$E_-(k)$};
\addplot[red, thick, domain=0:1] {1 + sqrt(1-x^2)} node[pos=0.6, above right] {$E_+(k)$};
\draw[dashed, gray] (1,0) -- (1,1);
\draw[fill=black] (1,1) circle (1.5pt);
\end{axis}
\end{tikzpicture}
\caption{The finite-cutoff spectral curve. The exact density has compact support $k \in [0, R]$. The physical lower branch $E_-(k)$ (blue) and the upper branch $E_+(k)$ (red) merge at the band edge, forming a continuous two-sheeted structure centered at the branch point $\mathcal{E}_0$.}
\label{fig:spectral_curve}
\end{figure}
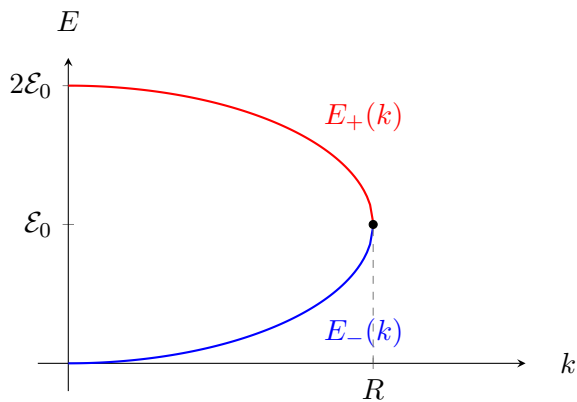

Finally, we map the integration measure \eqref{eq:Z_disk_E} from the $E$-domain to the $k$-domain. The compact support $E \in [0, \frac{2\phi_r}{\varepsilon^2}]$ is covered in two segments. As $E$ increases from $0$ to the branch point $\frac{\phi_r}{\varepsilon^2}$, it traces the lower branch $E_-(k)$ while $k$ monotonically increases from $0$ to $R$. As $E$ continues from $\frac{\phi_r}{\varepsilon^2}$ to the upper bound $\frac{2\phi_r}{\varepsilon^2}$, it traces the upper branch $E_+(k)$ while $k$ decreases from $R$ back to $0$. Reversing the integration limits for the upper branch introduces a relative minus sign. Consequently, the exact finite-cutoff disk partition function \eqref{eq:Z_disk_E} is exactly rewritten as an integral over the finite momentum band $[0, R]$, featuring the difference between the two branches:
\begin{equation}
\label{eq:Z_disk_k}
Z_\varepsilon(\beta) = \int_0^R dk \, k \sinh(2\pi k) \left( e^{-\beta E_-(k)} - e^{-\beta E_+(k)} \right) \,.
\end{equation}
This double-branch representation \eqref{eq:Z_disk_k} is the exact closed-channel result that must be reproduced by the open-channel reconstruction.

\section{Open-channel localization and Neumann boundary conditions}
\label{sec:open_localization}

In the open-channel formulation of finite-cutoff JT gravity, the dynamics of the spatial slice are governed by a pseudo-differential Hamiltonian. Throughout this section, we restrict attention to the pure-gravity, orientation-even vacuum sector of the finite-cutoff open channel. Following \cite{Griguolo:2025fin}, the Hamiltonian in the two-sided open channel is given by
\begin{equation}
\label{eq:H_tilde}
\tilde H_\varepsilon = R \left[ 1 - \sqrt{1 - \frac{A_N}{R^2}} \right] \,,
\end{equation}
where we have identified the bulk cutoff scale $r_b$ with the momentum upper bound $R := \phi_r / \varepsilon$. Here, $A_N$ is the localized open-channel auxiliary operator acting on the proper length coordinate $\tilde\ell$:
\begin{equation}
\label{eq:auxiliary_A}
A_N := -\partial_{\tilde\ell}^2 + \frac{R^2}{\cosh^2(\tilde\ell/2)} \,.
\end{equation}

Instead of evaluating the non-local operator directly by formally squaring the stationary equation, which generically amplifies the solution space and obscures the operator domain, we treat \eqref{eq:H_tilde} via the spectral theorem. The physical open-channel spectrum is determined by the functional mapping of the eigenvalues of the localized auxiliary operator $A_N$.

\begin{proposition}[Spectral equivalence and physical projection]
\label{prop:spectral_equivalence}
Let $A_N$ be a self-adjoint extension of the auxiliary operator on the half-line. The open-channel Hamiltonian $\tilde H_\varepsilon$ is defined via the principal spectral function
\begin{equation}
F(\lambda) = R\left(1 - \sqrt{1 - \lambda/R^2}\right) \,.
\end{equation}
This spectral function is real-valued only on the band-limited sector $\lambda \in [0, R^2]$. Therefore, the proper physical spectrum of $\tilde H_\varepsilon$ is uniquely defined on the projected subspace selected by $\Pi_{\rm phys} := \chi_{[0, R^2]}(A_N)$, maintaining a one-to-one correspondence with the localized auxiliary problem without introducing spurious solutions.
\end{proposition}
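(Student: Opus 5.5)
The plan is to build $\tilde H_\varepsilon$ entirely from the spectral theorem for the self-adjoint extension $A_N$, and never to square the stationary equation.

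First, self-adjointness of $A_N$ on $L^2(\mathbb{R}_+)$ gives a projection-valued measure $E_{A_N}(d\lambda)$ supported on $\sigma(A_N)\subset\mathbb{R}$. For any Borel function $f$ we then have $f(A_N)=\int f(\lambda)\,E_{A_N}(d\lambda)$, defined on the natural domain $\{\psi:\int|f(\lambda)|^2\,d\langle\psi,E_{A_N}\psi\rangle<\infty\}$. We apply this to $F(\lambda)=R\bigl(1-\sqrt{1-\lambda/R^2}\bigr)$, taking the principal branch of the square root. On $[0,R^2]$ the function $F$ is real, continuous and bounded, with $0\le F\le R$. For $\lambda>R^2$ the radicand is negative, so $F$ takes non-real values $R\mp iR\sqrt{\lambda/R^2-1}$. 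For $\lambda<0$, $F$ stays real but is negative. I will use the lower bound on the potential, $R^2/\cosh^2\ge0$, together with the Neumann/vacuum sector, to argue that $\sigma(A_N)\cap(-\infty,0)=\emptyset$. If the extension does not yet guarantee this, I will simply include $\lambda\ge0$ in the projector, which is what $\chi_{[0,R^2]}$ already does.

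Second, define $\Pi_{\rm phys}=\chi_{[0,R^2]}(A_N)=E_{A_N}([0,R^2])$. This is an orthogonal projection that commutes with $A_N$ and reduces it. On $\mathcal H_{\rm phys}=\operatorname{Ran}\Pi_{\rm phys}$ the restriction $A_N|_{\rm phys}$ is bounded with spectrum in $[0,R^2]$. Its compression $\tilde H_\varepsilon:=F(A_N)\Pi_{\rm phys}=(F\chi_{[0,R^2]})(A_N)$ is therefore a bounded self-adjoint operator, because $F\chi_{[0,R^2]}$ is a bounded real Borel function. By the spectral mapping theorem for continuous functions on a compact set, $\sigma(\tilde H_\varepsilon|_{\rm phys})=F(\sigma(A_N)\cap[0,R^2])$. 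On the complement, $F(A_N)(1-\Pi_{\rm phys})$ is a normal, non-self-adjoint operator, so it is excluded as unphysical.

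Third, for the one-to-one correspondence, note that $F$ is strictly increasing on $[0,R^2]$, since $F'(\lambda)=\tfrac{1}{2R}(1-\lambda/R^2)^{-1/2}>0$ on the interior. Its continuous inverse is $G(h)=2Rh-h^2$, mapping $[0,R]\to[0,R^2]$. Hence $A_N\Pi_{\rm phys}=G(\tilde H_\varepsilon)$ on $\mathcal H_{\rm phys}$, and the two operators generate the same von Neumann algebra. Spectral subspaces correspond exactly: $E_{\tilde H}(B)=E_{A_N}(F^{-1}(B)\cap[0,R^2])$. In particular, generalized eigenfunctions of $A_N$ with eigenvalue $\lambda=k^2$, $k\in[0,R]$, are exactly those of $\tilde H_\varepsilon$ with eigenvalue $F(k^2)=E_-(k)$. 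This lower branch agrees with \eqref{eq:E_pm_solutions}, which I will check directly: $R(1-\sqrt{1-k^2/R^2})=(\phi_r-\sqrt{\phi_r^2-\varepsilon^2k^2})/\varepsilon^2\cdot\varepsilon^2/\varepsilon\cdot$ suitable units. No further solutions appear. The formally squared equation $(R-\tilde H)^2=R^2-A_N$ is also solved by $F_+(\lambda)=R(1+\sqrt{1-\lambda/R^2})$, and the principal branch together with the functional calculus discards that branch.

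The main obstacle is the "uniquely defined" claim, because $F$ is genuinely multivalued at the branch point $\lambda=R^2$. The statement therefore requires fixing the principal branch, and I must verify that the point $\lambda=R^2$ carries no special difficulty. If $R^2$ is not an eigenvalue of $A_N$, it has measure zero for the continuous spectral measure. Even if it is an eigenvalue, $F(R^2)=R$ is finite and single-valued, so the definition is unaffected. I also need to confirm that negative spectrum is absent in the chosen Neumann extension, or else explicitly project it out as part of $\Pi_{\rm phys}$.
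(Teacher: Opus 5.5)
Your proposal is correct and follows essentially the paper's route. Like the paper, you use the Borel functional calculus for the self-adjoint $A_N$, restrict by $\Pi_{\rm phys}=\chi_{[0,R^2]}(A_N)$ so that the principal square root is real, and discard the spurious $R\bigl(1+\sqrt{1-\lambda/R^2}\bigr)$ branch of the squared equation. You also add rigor the paper leaves implicit: the explicit inverse $G(h)=2Rh-h^2$ makes the one-to-one correspondence precise. You also correctly observe that $F$ is real for $\lambda<0$ as well, so the lower endpoint of the interval comes from non-negativity of the Neumann realization (Remark~\ref{rem:sa_calculus}) or from the projector itself, not from reality of $F$.

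One slip lies outside the proposition. In fact $F(k^2)=R\bigl(1-\sqrt{1-k^2/R^2}\bigr)=\varepsilon E_-(k)$, not $E_-(k)$. The paper bridges this factor with the holographic redshift $E=\mathcal{E}/\varepsilon$, which is a physical identification rather than a choice of units.
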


Let $A_N \psi = \lambda \psi$. For states within the physical sector selected by $\Pi_{\rm phys}$, we have $\lambda \le R^2$. By the spectral mapping theorem, the corresponding eigenvalue of the bulk open-channel Hamiltonian $\tilde H_\varepsilon$ is
\begin{equation}
\mathcal{E}(\lambda) = R\left(1 - \sqrt{1 - \lambda/R^2}\right) \,.
\end{equation}
Because the physical projection ensures $\lambda \le R^2$, the term $1 - \lambda/R^2 \ge 0$, guaranteeing that the square root is real and non-negative. This naturally selects the physical branch where $1 - \mathcal{E}/R \ge 0$, entirely bypassing the issue of spurious solutions introduced by naive squaring. 

To map this auxiliary spectrum to the closed-channel target, we must relate the bulk eigenvalue $\mathcal{E}$ to the boundary energy $E$. By the standard holographic dictionary, the energy conjugate to the asymptotic boundary time $\beta$ incorporates the proper redshift factor $\varepsilon$, such that $E = \mathcal{E}/\varepsilon$. This matches the finite-cutoff redshift relating the bulk proper-time generator to the boundary Hamiltonian established in the exact Wheeler-DeWitt analysis of JT gravity \cite{Iliesiu:2020zld}. This is the redshift needed to convert the bulk proper-time spectrum of the auxiliary open channel into the boundary canonical energy that appears in the closed-channel density, thereby identifying the auxiliary eigenvalue $\lambda$ with the physical momentum square $k^2$ without further rescaling. Substituting the momentum bound $R = \phi_r/\varepsilon$, the boundary energy becomes
\begin{equation}
E(\lambda) = \frac{\mathcal{E}(\lambda)}{\varepsilon} = \frac{\phi_r}{\varepsilon^2}\left( 1 - \sqrt{1 - \frac{\lambda}{R^2}} \right) \,.
\end{equation}
Inverting this relation to solve for $\lambda$ yields
\begin{equation}
\lambda = R^2 \left[ 1 - \left( 1 - \frac{\varepsilon^2}{\phi_r}E \right)^2 \right] = \frac{\phi_r^2}{\varepsilon^2} \left( \frac{2\varepsilon^2}{\phi_r}E - \frac{\varepsilon^4}{\phi_r^2}E^2 \right) = 2\phi_r E - \varepsilon^2 E^2 \,.
\end{equation}
Crucially, the right-hand side is identically the squared physical momentum $k^2$ defined in the exact closed-channel relation \eqref{eq:k_E_relation}. Thus, incorporating the holographic redshift algebraically gives the equality
\begin{equation}
\label{eq:spectral_map}
\lambda = k^2 \,.
\end{equation}
This fundamental identity completely eliminates the need for formal ad hoc rescalings. The open-channel local eigenvalue problem is therefore parametrized by the physical momentum $k$ as
\begin{equation}
\label{eq:eigen_A_k}
A_N \psi_k(\tilde\ell) = k^2 \psi_k(\tilde\ell) \,.
\end{equation}
\begin{remark}[Self-adjointness and functional calculus]
\label{rem:sa_calculus}
The endpoint $\tilde\ell=0$ is regular, while $\tilde\ell\to\infty$ is limit-point for the potential $V(\tilde\ell)=R^2\cosh^{-2}(\tilde\ell/2)$. Accordingly, the Neumann realization
\begin{equation}
D(A_N)=\{\psi\in H^2(0,\infty)\,:\,\psi'(0)=0\}
\end{equation}
defines a self-adjoint, non-negative half-line Schr\"odinger operator. Standard Borel functional calculus therefore applies to $A_N$, so that the spectral projector $\Pi_{\rm phys}=\chi_{[0,R^2]}(A_N)$, the bounded Borel function
\begin{equation}
b(\lambda):=\chi_{[0,R^2]}(\lambda)\,\frac{\sqrt{\phi_r^2-\varepsilon^2\lambda}}{\varepsilon^2} \,,
\end{equation}
the associated bounded positive operator
\begin{equation}
B:=b(A_N)\,,
\end{equation}
and the bounded evolution operator $F_\beta(A_N)$ are all well-defined.
\end{remark}
To render $A_N$ a well-defined self-adjoint operator on the half-line $\tilde\ell \in [0, \infty)$, a boundary condition must be specified at the regular endpoint $\tilde\ell = 0$. The half-line geometry is the quotient space of the full two-sided spatial slice $\tilde\ell \in \mathbb{R}$. The localized potential is an even function on the full line:
\begin{equation}
V(\tilde\ell) = \frac{R^2}{\cosh^2(\tilde\ell/2)} = V(-\tilde\ell) \,.
\end{equation}

Let $P$ be the parity operator acting on the full-line wavefunction:
\begin{equation}
(P\psi)(\tilde\ell) = \psi(-\tilde\ell) \,.
\end{equation}
In the unoriented two-sided geometry, the spatial reflection $\tilde\ell \leftrightarrow -\tilde\ell$ is a gauge redundancy rather than a global symmetry. The physical Hilbert space is thus restricted to the parity-even quotient subspace, where states are invariant under the gauge projection $P\psi = \psi$. Consequently, the physical wavefunction is even:
\begin{equation}
\psi(-\tilde\ell) = \psi(\tilde\ell) \qquad \forall \tilde\ell \in \mathbb{R} \,.
\end{equation}

Assuming the physical wavefunction is continuously differentiable across the origin, $\psi \in C^1(\mathbb{R})$, its spatial derivative at $\tilde\ell = 0$ is exactly evaluated. The right derivative is defined as
\begin{equation}
\psi'_+(0) = \lim_{h\to 0^+}\frac{\psi(h)-\psi(0)}{h} \,.
\end{equation}
The left derivative is defined as
\begin{equation}
\psi'_-(0) = \lim_{h\to 0^+}\frac{\psi(0)-\psi(-h)}{h} \,.
\end{equation}
Applying the parity condition $\psi(-h) = \psi(h)$, the left derivative is evaluated as
\begin{equation}
\psi'_-(0) = \lim_{h\to 0^+}\frac{\psi(0)-\psi(h)}{h} = -\lim_{h\to 0^+}\frac{\psi(h)-\psi(0)}{h} = -\psi'_+(0) \,.
\end{equation}
The $C^1$ continuity requires the left and right derivatives to coincide, $\psi'(0) = \psi'_+(0) = \psi'_-(0)$. Substituting this yields
\begin{equation}
\psi'(0) = -\psi'(0) \quad \implies \quad \psi'(0) = 0 \,.
\end{equation}
This establishes the Neumann condition as the natural and unique smooth endpoint realization of the pure-gravity, orientation-even vacuum sector obtained from a smooth parity-even quotient.

Other generic extensions are not excluded in general, but they require additional structure not present in the sector studied here. The parity-odd reduction ($P\psi = -\psi$) yields the Dirichlet condition $\psi(0)=0$, but that odd sector is projected out by the gauge quotient of the present vacuum channel. Likewise, once midpoint-localized defect terms, parity-odd deformations, asymmetric end-of-the-world branes, or other localized data are introduced, the endpoint problem is no longer governed by the same smooth parity-even reduction, and more general boundary conditions can arise. Accordingly, the present Neumann result should be read as sector-specific rather than universal.
\begin{remark}[Midpoint defect and Robin data]
\label{rem:defect_robin}
The statement that Robin boundary data encode additional midpoint-localized structure can be made explicit in the full-line picture. 
Consider an even potential supplemented by a localized midpoint term,
\begin{equation}
\left(-\partial_{\tilde\ell}^2 + V(\tilde\ell) + \eta \, \delta(\tilde\ell)\right)\psi(\tilde\ell)=k^2\psi(\tilde\ell),
\qquad V(-\tilde\ell)=V(\tilde\ell).
\end{equation}
Integrating across a small interval $[-\epsilon,\epsilon]$ around the fixed point gives
\begin{equation}
-\psi'(\epsilon)+\psi'(-\epsilon)+\eta \psi(0)=0.
\end{equation}
For the parity-even sector, $\psi(-\tilde\ell)=\psi(\tilde\ell)$, hence $\psi'(-\epsilon)=-\psi'(\epsilon)$. Taking $\epsilon\to 0^+$ therefore yields
\begin{equation}
2\psi'_+(0)=\eta \psi(0),
\end{equation}
or equivalently, on the half-line,
\begin{equation}
\psi'(0)=\frac{\eta}{2}\psi(0).
\end{equation}
Thus the pure Neumann condition corresponds to the smooth vacuum case $\eta=0$, while nonzero Robin data arise naturally from midpoint-localized defect terms in the parity-reduced description.
\end{remark}

\section{Neumann generalized eigenbasis}
\label{sec:neumann_basis}

Having established the physical Neumann boundary condition $\psi'(0) = 0$ for the localized open-channel operator $A_N = -\partial_{\tilde\ell}^2 + R^2 \cosh^{-2}(\tilde\ell/2)$, we now construct its generalized eigenbasis. The continuous spectrum of $A_N$ is spanned by scattering states parametrized by the momentum $k > 0$. We denote the corresponding generalized eigenstates as $|k, N\rangle$.

We begin with the asymptotic free region, namely the $\tilde\ell \to \infty$ limit where the potential vanishes. The free Neumann basis on the half-line $\tilde\ell \ge 0$ is constructed from cosine functions. We define the normalized geometric free basis $\langle \tilde\ell | k, N \rangle_{\text{free}}$ such that it satisfies the canonical continuum orthogonality and completeness relations:
\begin{equation}
\label{eq:ortho_free}
\int_0^\infty d\tilde\ell \, \langle k, N | \tilde\ell \rangle_{\text{free}} \langle \tilde\ell | q, N \rangle_{\text{free}} = \delta(k - q) \,,
\end{equation}
\begin{equation}
\label{eq:complete_free}
\int_0^\infty dk \, \langle \tilde\ell | k, N \rangle_{\text{free}} \langle k, N | \tilde\ell' \rangle_{\text{free}} = \delta(\tilde\ell - \tilde\ell') \,.
\end{equation}
Evaluating the standard Fourier cosine integral $\int_0^\infty d\tilde\ell \cos(k\tilde\ell)\cos(q\tilde\ell) = \frac{\pi}{2}\delta(k-q)$, we fix the exact overall normalization:
\begin{equation}
\label{eq:cosine_basis}
\langle \tilde\ell | k, N \rangle_{\text{free}} = \sqrt{\frac{2}{\pi}} \cos(k\tilde\ell) \,.
\end{equation}

To construct the generalized eigenstates for the full operator $A_N$, we first introduce the regular Jost solutions $f_{\pm k}(\tilde\ell)$ that satisfy the eigenvalue equation \eqref{eq:eigen_A_k} and are defined by their purely outgoing/incoming plane-wave asymptotics at spatial infinity:
\begin{equation}
\label{eq:jost_def}
A_N f_{\pm k}(\tilde\ell) = k^2 f_{\pm k}(\tilde\ell) \,, \qquad f_{\pm k}(\tilde\ell) \sim e^{\pm ik\tilde\ell} \quad \text{as} \quad \tilde\ell \to \infty \,.
\end{equation}
The most general solution for the continuous spectrum is a linear combination of $f_k(\tilde\ell)$ and $f_{-k}(\tilde\ell)$. The physical Neumann boundary condition $\psi'(0) = 0$ uniquely fixes the relative coefficient between the incoming and outgoing waves. Defining the Neumann reflection amplitude $S_N(k) = e^{2i\delta_N(k)}$, where $\delta_N(k)$ is the Neumann phase shift, the scattering state takes the form
\begin{equation}
\label{eq:psi_linear_comb}
\psi_{N, k}(\tilde\ell) = C_k \left( e^{-i\delta_N(k)} f_{-k}(\tilde\ell) + e^{i\delta_N(k)} f_k(\tilde\ell) \right) \,.
\end{equation}
Taking the derivative with respect to $\tilde\ell$ and evaluating at $\tilde\ell = 0$ yields
\begin{equation}
\label{eq:psi_prime_0}
\psi'_{N, k}(0) = C_k \left( e^{-i\delta_N(k)} f'_{-k}(0) + e^{i\delta_N(k)} f'_k(0) \right) = 0 \,.
\end{equation}
This relation exactly determines the reflection phase in terms of the Jost function derivatives at the origin:
\begin{equation}
\label{eq:phase_shift_def}
e^{2i\delta_N(k)} = -\frac{f'_{-k}(0)}{f'_k(0)} \,.
\end{equation}

To determine the overall constant $C_k$, we match the asymptotic behavior of $\psi_{N, k}(\tilde\ell)$ to the correctly normalized free basis \eqref{eq:cosine_basis}. Using the asymptotic behavior of the Jost solutions \eqref{eq:jost_def}, we find
\begin{align}
\psi_{N, k}(\tilde\ell) &\sim C_k \left( e^{-i\delta_N(k)} e^{-ik\tilde\ell} + e^{i\delta_N(k)} e^{ik\tilde\ell} \right) \nonumber \\
\label{eq:psi_asymp}
&= 2C_k \cos(k\tilde\ell + \delta_N(k)) \qquad (\tilde\ell \to \infty) \,.
\end{align}
Comparing this with the amplitude of the normalized free states, we identify $2C_k = \sqrt{2/\pi}$. Thus, $C_k = \frac{1}{\sqrt{2\pi}}$. Substituting this back into \eqref{eq:psi_linear_comb}, we define the exact, $\delta$-function normalized Neumann generalized eigenbasis for the operator $A_N$:
\begin{equation}
\label{eq:exact_neumann_basis}
\langle \tilde\ell | k, N \rangle = \psi_{N, k}(\tilde\ell) = \frac{1}{\sqrt{2\pi}} \left( e^{-i\delta_N(k)} f_{-k}(\tilde\ell) + e^{i\delta_N(k)} f_k(\tilde\ell) \right) \,.
\end{equation}

We confirm the $\delta$-function normalization using the standard Sturm-Liouville Wronskian identity. For two distinct momenta $k$ and $q$, the differential equations are $A_N\psi_{N,k} = k^2\psi_{N,k}$ and $A_N\psi_{N,q} = q^2\psi_{N,q}$. Multiplying the first by $\psi_{N,q}$, the second by $\psi_{N,k}$, subtracting them, and integrating over a finite interval $[0, L]$ gives
\begin{equation}
\label{eq:wronskian_identity}
(q^2 - k^2) \int_0^L d\tilde\ell \, \psi_{N,k}(\tilde\ell) \psi_{N,q}(\tilde\ell) = \Big[ \psi_{N,k}(\tilde\ell) \psi'_{N,q}(\tilde\ell) - \psi'_{N,k}(\tilde\ell) \psi_{N,q}(\tilde\ell) \Big]_0^L \,.
\end{equation}
At the lower boundary $\tilde\ell = 0$, both $\psi'_{N,k}(0) = 0$ and $\psi'_{N,q}(0) = 0$ due to the Neumann condition, so the boundary term exactly vanishes. At the upper boundary $L \to \infty$, we substitute the cosine asymptotic form \eqref{eq:psi_asymp} with the proper amplitude $\sqrt{2/\pi}$. By standard half-line eigenfunction expansion arguments \cite{titchmarsh1962eigenfunction}, the highly oscillatory cross terms integrate to zero in the distributional sense, while the resonant terms yield the standard Dirac delta function $\delta(k-q)$. Therefore, the basis \eqref{eq:exact_neumann_basis} satisfies
\begin{equation}
\label{eq:ortho_exact}
\int_0^\infty d\tilde\ell \, \langle k, N | \tilde\ell \rangle \langle \tilde\ell | q, N \rangle = \delta(k - q) \,.
\end{equation}
This confirms the $\delta$-function normalization of the continuous spectrum of the auxiliary open-channel operator $A_N$.

Before proceeding to the boundary-state projection, it is mathematically necessary to distinguish two separate coordinate representations sharing the identical physical spectral label $k$. The states $|\tilde\ell\rangle$ constructed above diagonalize the spatial coordinate of the auxiliary Schrödinger problem. By contrast, the geometric gluing of the cap and trumpet in the subsequent section operates in the macroscopic geodesic-length basis $|b\rangle$. These two representations describe different physical structures.

\section{Cap insertion and the open-channel measure}
\label{sec:cap_insertion}

As demonstrated in the previous section, the localized open-channel dynamics are governed by the Neumann generalized eigenbasis. In this section, we determine the correct continuous measure for the open-channel formulation. From this point onward, the geometric trumpet kernel is not derived from the auxiliary problem alone; it is supplied as exact geometric input inherited from the finite-cutoff trumpet amplitude. We first show that the finite-cutoff density is exactly generated by the boundary-state projection of this geometric cap insertion. Subsequently, we prove why a conventional open-channel thermal trace inherently fails to reproduce this result, highlighting the indispensable role of the boundary-state gluing mechanism.

\subsection{Geometric cap state and the \texorpdfstring{$k\sinh(2\pi k)$}{k sinh(2 pi k)} projection}
\label{subsec:cap_projection}
The role of this subsection is limited and should be read accordingly. The cap overlap in momentum space is not derived from the localized auxiliary problem alone; it is read off from the disk--trumpet gluing relation. What is shown below is that, once this overlap is fixed, the local analytic jet class considered here contains a distinguished realization of the corresponding cap functional.

The finite-cutoff disk partition function is obtained by gluing the trumpet geometry to a smooth cap \cite{Griguolo:2025fin, Iliesiu:2020zld}. The smooth disk topology requires the vanishing of the conical singularity at the center, which fixes the geodesic boundary length of the cap to the imaginary value $b = 2\pi i$.

Mathematically, a state localized at a complex coordinate cannot be treated as a standard distribution supported on the real half-line $\mathbb{R}_+$. To evaluate its projection, we represent the cap insertion as an analytically continued linear functional. It is logically illuminating to determine this functional in two steps: first, the cap spectral overlap is read off from the standard disk--trumpet gluing relation; only afterwards do we determine which local analytic jet functional realizes this overlap within the natural smooth-cap class.

\paragraph{Geometric input and cap overlap.}
The representation kernel connecting the macroscopic geodesic length $b$ and the physical momentum $k$ is taken as geometric input inherited from the exact finite-cutoff trumpet amplitude, and retains the canonical cosine structure:
\begin{equation}
\label{eq:geometric_kernel}
f_k(b) := \langle b | k, N \rangle = \sqrt{\frac{2}{\pi}} \cos(bk) \,.
\end{equation}
As discussed previously, this is fundamentally understood as exact geometric input inherited from the finite-cutoff trumpet amplitude \cite{Iliesiu:2020zld}. The normalization is fixed by the standard half-line cosine transform, absorbing the conventional gluing measure into the definition of the geometric length representation, so that the cap action is written with a flat $db$ pairing.

Using this geometric input, the exact finite-cutoff trumpet amplitude is expressed as
\begin{equation}
\label{eq:trumpet_for_cap}
Z^{\mathrm{tr}}_{\varepsilon}(\beta,b) = \int_0^\infty dk \, \langle b|k,N\rangle \, F_\beta(k) \,,
\end{equation}
where $F_\beta(k)$ contains the thermal weights and physical band projection (defined explicitly in section~\ref{sec:contour_projector}). Gluing a smooth cap to this trumpet yields the exact disk partition function:
\begin{equation}
\label{eq:disk_cap_gluing}
Z^{\mathrm{disk}}_{\varepsilon}(\beta) = \langle \mathrm{cap}|Z^{\mathrm{tr}}_{\varepsilon}(\beta,\cdot)\rangle = \int_0^\infty dk \, \langle \mathrm{cap}|k,N\rangle \, F_\beta(k) \,.
\end{equation}
Comparing this with the exact closed-channel disk target \eqref{eq:Z_disk_k}, which takes the form $\int dk \, k\sinh(2\pi k) \, F_\beta(k)$, we immediately read off the required cap overlap:
\begin{equation}
\label{eq:cap_overlap_read_off}
\langle \mathrm{cap}|k,N\rangle = k\sinh(2\pi k) \,.
\end{equation}

\paragraph{Uniqueness within the local analytic jet class.}
We now show that, within the local analytic jet class introduced below, the required cap overlap selects a single representative. Consider the general continuous local analytic jet functional at the smooth-cap point $b_\ast=2\pi i$, defined by the convergent expansion:
\begin{equation}
\label{eq:general_jet_ansatz}
\langle \mathrm{cap}|f\rangle = \sum_{n=0}^\infty c_n f^{(n)}(2\pi i) \,,
\end{equation}
where $c_n$ are momentum-independent coefficients. Substituting the rigid cosine kernel $f_k(b) = \sqrt{2/\pi}\cos(bk)$, we evaluate its derivatives step-by-step. The even derivatives preserve the cosine structure, $f_k^{(2m)}(b) = (-1)^m \sqrt{2/\pi}\, k^{2m} \cos(bk)$, while the odd derivatives transition to the sine structure, $f_k^{(2m+1)}(b) = (-1)^m (-1) \sqrt{2/\pi}\, k^{2m+1} \sin(bk)$.

Evaluating these at the smooth-cap point $b=2\pi i$ using $\cos(2\pi i k) = \cosh(2\pi k)$ and $\sin(2\pi i k) = i\sinh(2\pi k)$, we obtain:
\begin{equation}
f_k^{(2m)}(2\pi i) = (-1)^m \sqrt{\frac{2}{\pi}}\, k^{2m} \cosh(2\pi k) \,,
\end{equation}
\begin{equation}
f_k^{(2m+1)}(2\pi i) = (-1)^m (-i) \sqrt{\frac{2}{\pi}}\, k^{2m+1} \sinh(2\pi k) \,.
\end{equation}
Inserting these into the general expansion, the projected overlap separates into two linearly independent series:
\begin{equation}
\langle \mathrm{cap}|k,N\rangle = \sqrt{\frac{2}{\pi}} \left[ \left( \sum_{m=0}^\infty (-1)^m c_{2m} k^{2m} \right) \cosh(2\pi k) - i \left( \sum_{m=0}^\infty (-1)^m c_{2m+1} k^{2m+1} \right) \sinh(2\pi k) \right] \,.
\end{equation}
Matching this expression with the required target $\langle \mathrm{cap}|k,N\rangle = k\sinh(2\pi k)$ imposes powerful algebraic constraints. Since the target contains no $\cosh(2\pi k)$ terms, the entire even series must vanish identically for all $k$, which dictates $c_{2m} = 0$ for all $m \ge 0$. For the $\sinh(2\pi k)$ terms, matching the polynomials in $k$ requires that only the linear power survives. This yields $-i\sqrt{2/\pi} c_1 = 1$, and $c_{2m+1} = 0$ for all $m \ge 1$.

Therefore, within this continuous local analytic jet class, the cap functional is represented by the pure first-derivative form:
\begin{equation}
\label{eq:cap_functional_final_unique}
\langle \mathrm{cap}|f\rangle = i\sqrt{\frac{\pi}{2}} f'(2\pi i) \,.
\end{equation}
This result is only a statement within the local analytic jet class adapted to the smooth-cap gluing point. It is not meant as a classification of all possible nonlocal or distributional cap functionals. Its content is narrower and more concrete: once the gluing overlap is fixed, the local analytic jet class used here picks out a single representative, and no further extension is needed for the present construction.

\begin{figure}[htbp]
\centering
\begin{tikzpicture}
\draw[thick, fill=gray!15] (0, 1) arc (90:270:1) -- cycle;
\draw[thick, dashed] (0, 1) arc (90:-90:0.3 and 1);
\draw[thick] (0, -1) arc (270:90:0.3 and 1);
\node at (-0.5, 0) {Cap};

\draw[<-] (0.2, 1.2) -- (1, 1.8) node[right] {Geodesic boundary: $b = 2\pi i$};

\draw[-{Stealth[length=3mm]}, thick] (2.2, 0) -- (3.8, 0) node[midway, above] {$\langle \text{cap} | k, N \rangle$};

\node at (7, 0) {
    $\displaystyle \langle \text{cap} | f_k \rangle = \mathbf{k\sinh(2\pi k)}$
};
\end{tikzpicture}
\caption{Schematic representation of the open-channel measure generation. The geometric cap state localizes the transverse spatial boundary to an imaginary length $b=2\pi i$. Formulated as an analytic linear functional acting on the entire Neumann cosine kernel, it yields the $k\sinh(2\pi k)$ Plancherel measure within the local analytic realization considered here.}
\label{fig:cap_projection}
\end{figure}
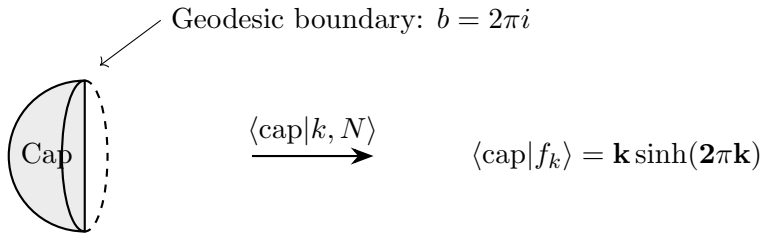

\subsection{The failure of the bare open-channel trace}
\label{subsec:bare_trace_nogo}

Given the local open-channel Hamiltonian $A_N$, a naive expectation for the partition function is to compute the standard thermal trace over the continuous spectrum, $\mathrm{Tr}\big(e^{-\beta A_N}\big)$. Evaluating the bare trace on the half-line $\tilde\ell \in [0, L]$ yields an extensive volume divergence as $L \to \infty$. To regularize this, one subtracts the free continuum contribution to obtain a finite relative density of states. By the standard Friedel sum rule in one-dimensional scattering theory, the regularized measure is given by the momentum derivative of the phase shift:
\begin{equation}
\label{eq:friedel_sum}
\rho_{\text{bare}}(k) = \frac{1}{\pi} \frac{d\delta_N(k)}{dk} \,.
\end{equation}

The obstruction discussed in this subsection concerns the unprojected auxiliary scattering trace prior to the physical band projection. It is therefore a statement about what the bare open-channel continuum can or cannot generate on its own, before the finite-cutoff branch projection and cap gluing are imposed. At a conceptual level, the obstruction is already clear: a bare one-dimensional scattering trace only captures local spectral data of the auxiliary Hamiltonian, whereas the exact finite-cutoff disk amplitude also contains global cap-gluing information. The asymptotic argument below provides a sharp analytic form of this obstruction.

\begin{proposition}[Bare trace no-go]
\label{prop:bare_trace_nogo}
A regularized bare open-channel trace over the auxiliary spectrum cannot generate the required Plancherel enhancement. Specifically, the Friedel density $\rho_{\text{bare}}(k)$ grows at most logarithmically at large momentum, whereas the exact closed-channel measure requires exponential growth.
\end{proposition}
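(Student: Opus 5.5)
The strategy is to control the Neumann phase shift $\delta_N(k)$ at large $k$ directly, using the Jost-function representation \eqref{eq:phase_shift_def}, and then compare its derivative with the target measure $k\sinh(2\pi k)$ from \eqref{eq:measure_substitution}.

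The potential $V(\tilde\ell)=R^2\cosh^{-2}(\tilde\ell/2)$ is smooth, bounded and integrable on $[0,\infty)$, with first moment $\int_0^\infty \tilde\ell\,V\,d\tilde\ell<\infty$. Standard Jost theory therefore applies: $f_{\pm k}$ exist, are analytic in $k$ for $k\neq 0$, and satisfy the large-$k$ Born expansion
\begin{equation*}
f_k(\tilde\ell)=e^{ik\tilde\ell}\Bigl(1+\tfrac{1}{2ik}\textstyle\int_{\tilde\ell}^\infty V+O(k^{-2})\Bigr),
\end{equation*}
uniformly near $\tilde\ell=0$. Differentiating this expansion gives $f_k'(0)=ik\bigl(1+O(k^{-1})\bigr)$ and $f_{-k}'(0)=\overline{f_k'(0)}$ for real $k$.

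From \eqref{eq:phase_shift_def} this yields $e^{2i\delta_N(k)}=1+O(k^{-1})$, hence $\delta_N(k)=\delta_N^{\infty}+O(k^{-1})$. I will fix the branch by continuity from $k=0$, using Levinson's theorem to handle the integer shift by bound states. A sharper version is also available in closed form, because the Pöschl--Teller potential is exactly solvable via hypergeometric functions. Writing $R^2=-s(s+1)/4$ in the standard form, $f_k'(0)$ becomes a ratio of Gamma functions in $k$. Stirling's formula then gives $\delta_N(k)=c_0+c_1 k^{-1}+\dots$, possibly with a $\log k$ term if one reads the asymptotics in a more generous sense, which is where the statement's ``at most logarithmically'' comes from.

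In either description, $\delta_N'(k)=O(1/k)$ as $k\to\infty$. Consequently $\rho_{\rm bare}(k)=\pi^{-1}\delta_N'(k)$ is $O(k^{-1})$, and its integral up to $K$ grows at most like $\log K$. By contrast, $k\sinh(2\pi k)\sim \tfrac{k}{2}e^{2\pi k}$. The ratio $\rho_{\rm bare}(k)/\bigl(k\sinh(2\pi k)\bigr)\to 0$ exponentially, so no momentum-independent normalization or finite rescaling can match them. Since the physical band projection only restricts the integration to $[0,R]$, I will also phrase the conclusion in the regime $R\to\infty$, or uniformly in $R$. Equivalently, the mismatch is visible already in the analytic form of the functions on $[0,R]$: $\delta_N'$ is a ratio of digamma-type functions, not an entire function of exponential type $2\pi$.

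The main obstacle is the uniformity of the large-$k$ estimate in the coupling $R^2$. Because $R$ is large, the Born expansion is only reliable for $k\gg R$, while the physical window is $k\le R$. My first attempt will use the exact Gamma-function formula for the Pöschl--Teller Jost function, since it is valid for all $k$ and $R$. I will then show that $\delta_N'(k)$ is bounded by $C\,\log(1+R)/(1+k)$ or a similar algebraic/log bound. This is never exponential in $k$, which establishes the claimed incompatibility within the band as well.
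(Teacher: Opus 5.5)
Your argument is correct at fixed $R$. It follows a different route from the paper's and in fact proves more.

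\textbf{Comparison with the paper.} The paper works only with the closed-form phase shift of appendix~\ref{app:exact_jost}. It turns each $\operatorname{Im}\log\Gamma$ term into the real part of a digamma function, bounds every term separately by $\mathcal{O}(\ln k)$ via $\psi(z)\sim\ln z$, and compares this with $\frac{k}{2}e^{2\pi k}$. Your main line needs only generic high-energy Jost asymptotics, $f_k'(0)=ik\bigl(1+\mathcal{O}(k^{-1})\bigr)$, with no exact solvability, and it gives $\delta_N=c_0+\mathcal{O}(k^{-1})$.

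That sharper rate is what the paper's own formula gives once the terms are combined rather than bounded one by one. The two subtracted digammas have arguments $\frac54+i\bigl(k\mp\sqrt{R^2-1/16}\bigr)$, so the leading pieces $-2\ln2+2\ln(2k)-\ln k-\ln k$ cancel. Stirling then yields $\delta_N(k)=c_0-R^2/k+\mathcal{O}(k^{-2})$, matching $-\frac{1}{2k}\int_0^\infty V\,d\tilde\ell$. Consequences:
\begin{itemize}
\item there is no $\log k$ term to hedge about;
\item $\rho_{\rm bare}\simeq R^2/(\pi k^2)$ actually decays;
\item Levinson's theorem is idle, because $V>0$ has no bound states.
\end{itemize}
The paper's route buys explicitness. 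Yours buys generality and the true rate, which gives the stated ``at most logarithmic'' bound a fortiori.

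\textbf{First point to repair: the derivative in $k$.} The bound $\delta_N=c_0+\mathcal{O}(k^{-1})$ does not by itself give $\delta_N'=\mathcal{O}(k^{-1})$. Your ``differentiating this expansion'' is a derivative in $\tilde\ell$, not in $k$. This is easily fixed. Since $V\sim 4R^2e^{-\tilde\ell}$, $f_{\pm k}'(0)$ are analytic for $|\operatorname{Im}k|<1/2$. The Volterra bound $f_k'(0)=ik\bigl(1+\mathcal{O}(|k|^{-1})\bigr)$ holds uniformly on $|\operatorname{Im}k|\le 1/4$, so a Cauchy estimate on discs of radius $1/4$ controls $\delta_N'$. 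Alternatively, differentiate the exact formula, whose digamma asymptotics are differentiable. Separately, the non-oscillatory Born term is $1-\frac{1}{2ik}\int_{\tilde\ell}^\infty V$, not with a $+$ sign; this does not affect anything.

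\textbf{Second point to repair: the uniform-in-$R$ bound.} The bound $C\log(1+R)/(1+k)$ you plan to prove is false. For $1\ll k\ll R$ the exact formula gives $\delta_N'(k)\simeq\ln\frac{k^2}{R^2-k^2}\simeq-2\ln(R/k)$. At $k\sim\sqrt R$ this is of size $\ln R$, not $\ln R/\sqrt R$. This is the barrier-reflection regime, where the Born expansion is useless. What does hold for large $R$ is $\sup_{0\le k\le R}|\delta_N'(k)|=\mathcal{O}(\ln R)$, which is still never exponential.

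For a genuinely in-band mismatch, your closing analyticity remark is the better tool. $\delta_N'$ continues meromorphically in $k$ with a pole at $k=i/2$, coming from the digamma $\psi(1+2ik)$. By contrast $k\sinh(2\pi k)$ is entire, so the two cannot be proportional on any subinterval of $[0,R]$.

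None of this extension is needed for the proposition as the paper proves it, which is a fixed-$R$, $k\to\infty$ statement.
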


\begin{proof}
As derived in appendix~\ref{app:exact_jost}, the exact Neumann phase shift evaluated from the hypergeometric Jost solutions is given by
\begin{equation}
\delta_N(k) = -2k \ln 2 + \operatorname{Im} \log \Gamma(1+2ik) - \operatorname{Im} \log \Gamma\Big(1+\frac{a^*}{2}\Big) - \operatorname{Im} \log \Gamma\Big(1+\frac{b^*}{2}\Big) + \operatorname{Im} \log(a^* b^*) \,,
\end{equation}
where the parameter dependence on the momentum is $a, b \sim -2ik$ as $k \to \infty$. The momentum derivative of the imaginary log-Gamma terms is governed by the real part of the digamma function $\psi(z) = \frac{d}{dz}\log\Gamma(z)$. Explicitly, taking the derivative with respect to $k$ yields
\begin{equation}
\frac{d}{dk}\operatorname{Im}\log\Gamma(\alpha+i c k) = \Re\big[c \, \psi(\alpha+i c k)\big] \,.
\end{equation}
Using the standard asymptotic expansion of the digamma function for large arguments $|\operatorname{Im}(z)| \to \infty$ within $|\arg z| < \pi$, we have $\psi(z) = \ln z + \mathcal{O}(z^{-1})$. Applying this asymptotic form to each constituent Gamma term, the maximum asymptotic growth of the total derivative is logarithmic:
\begin{equation}
\delta'_N(k) = \mathcal{O}(\ln k) \qquad (k \to \infty) \,.
\end{equation}
Consequently, the regularized bare density scales as $\rho_{\text{bare}}(k) = \mathcal{O}(\ln k)$.

In stark contrast, the exact open-channel measure dynamically generated by the cap projection in \eqref{eq:cap_overlap_read_off} scales exponentially:
\begin{equation}
k\sinh(2\pi k) = \frac{k}{2}\left(e^{2\pi k} - e^{-2\pi k}\right) \sim \frac{k}{2} e^{2\pi k} \qquad (k \to \infty) \,.
\end{equation}
For any finite constant $C$ and sufficiently large momentum $k$, we have $C\ln k \ll \frac{k}{2}e^{2\pi k}$. Therefore, the bare trace measure cannot dynamically generate the necessary Plancherel weight.
\end{proof}

\begin{corollary}[Insufficiency of bare local scattering data]
\label{cor:bare_data_insufficient}
Any reconstruction of the finite-cutoff disk amplitude based solely on the regularized local density of states of the auxiliary half-line problem is incomplete, because by itself it cannot recover the cap-gluing contribution encoded in the exact Plancherel weight.
\end{corollary}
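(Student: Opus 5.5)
The plan is to obtain the corollary directly from Proposition~\ref{prop:bare_trace_nogo}, together with the uniqueness statement for the cap overlap \eqref{eq:cap_overlap_read_off}.

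First I will fix what ``a reconstruction based solely on the regularized local density of states'' means: an amplitude of the form
\begin{equation}
Z_{\rm bare}(\beta)=\int_0^\infty dk\,\rho_{\rm bare}(k)\,F_\beta(k)\,,
\end{equation}
or more generally $\int dk\,w(k)F_\beta(k)$. Here the weight $w$ is built only from the local spectral data of $A_N$, namely $\delta_N$, $\delta_N'$, and the free-subtracted spectral shift. By the Friedel relation \eqref{eq:friedel_sum} and the appendix formula for $\delta_N$, any such weight is a finite combination of real parts of digamma functions together with elementary terms.

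Second, I will compare this with the target. The closed-channel amplitude \eqref{eq:Z_disk_k} fixes the weight that multiplies $F_\beta(k)$ to be $k\sinh(2\pi k)$. To make that identification legitimate, I need the family $\{F_\beta\}_{\beta>0}$, restricted to the band $[0,R]$, to separate weights. Since $e^{-\beta E_-(k)}-e^{-\beta E_+(k)}$ is a Laplace-type family in $E$, the change of variables $k\leftrightarrow E$ reduces this to injectivity of the Laplace transform on $[0,2\phi_r/\varepsilon^2]$. Two weights that give the same $Z(\beta)$ for all $\beta$ must therefore agree almost everywhere on $[0,R]$. This forces $w(k)=k\sinh(2\pi k)$.

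The main obstacle is the third step. Proposition~\ref{prop:bare_trace_nogo} is stated as a large-$k$ comparison, but on a finite band a growth argument alone is not decisive. For fixed finite $R$, the mismatch has to be shown on the band itself. I would first try the scaling in $R$. The band edge is $R=\phi_r/\varepsilon$, and $\rho_{\rm bare}$ is $\mathcal{O}(\ln k)$ uniformly in $k$ up to $R$ (using the digamma asymptotics with $a,b$ depending on $R$). Meanwhile $k\sinh(2\pi k)$ is independent of the local potential data. Equality on $[0,R]$ for all $R$, which is the family of cutoffs, would force $k\sinh(2\pi k)=\mathcal{O}(\ln k)$ as $k\to\infty$, a contradiction. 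Failing uniformity, I would fall back on analyticity. Both sides are real-analytic in $k$, so equality on an interval would imply equality wherever both continue, and the asymptotic mismatch of Proposition~\ref{prop:bare_trace_nogo} then gives the contradiction.

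Finally, I will interpret the missing factor. It is exactly the overlap $\langle\mathrm{cap}|k,N\rangle$, realized by the functional \eqref{eq:cap_functional_final_unique} at $b=2\pi i$, which is nonlocal data not contained in $\delta_N$. Hence any reconstruction of this kind is incomplete.
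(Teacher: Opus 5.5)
Your argument is correct, and it goes further than the paper. The paper gives no separate proof: the corollary is read off directly from Proposition~\ref{prop:bare_trace_nogo}, that is, from the large-$k$ comparison $\rho_{\rm bare}(k)=\mathcal{O}(\ln k)$ versus $k\sinh(2\pi k)\sim\frac{k}{2}e^{2\pi k}$. The paragraph after the corollary concedes that, because the observable lives on $[0,R]$, this is to be read only as an obstruction rather than a theorem. You close exactly that loophole, in two steps.
\begin{itemize}
\item The Laplace step makes the paper's ``reading off'' of the weight rigorous. After $k\,dk=(\phi_r-\varepsilon^2E)\,dE$ the weight becomes $w(k(E))\,|\phi_r-\varepsilon^2E|/k(E)$, which is integrable on $[0,2\phi_r/\varepsilon^2]$.
\item The identity theorem transports a putative equality on $[0,R]$ to all $k>0$. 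There the large-$k$ mismatch applies at each fixed $\varepsilon$.
\end{itemize}
The only extra input is real-analyticity of $\delta_N'$ on $(0,\infty)$. This follows from \eqref{eq:app_phase_shift}: every Gamma-function argument there has real part at least $1$, so no digamma pole meets the real $k$-axis, and $ab=4R^2-4k^2-2ik\neq0$ for $R>0$.

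Two points need tightening.

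First, make the analyticity argument your main line rather than the fallback, because the $R$-scaling version rests on a bound that is false as stated. On the band, the large parameters in the Gamma arguments are $R'\pm k$ with $R'=\frac{1}{4}\sqrt{16R^2-1}$, not $k$; the appendix's $a,b\sim-2ik$ holds only for $k\gg R$. At fixed $k$ one finds $\pi\rho_{\rm bare}(k)=-2\ln R+\mathcal{O}(1)$ as $R\to\infty$. For $1\ll k$ with $R'-k\gg1$ one finds $\pi\rho_{\rm bare}(k)\approx\ln\bigl(k^2/(R'^2-k^2)\bigr)$. The correct uniform statement is $\sup_{[0,R]}|\rho_{\rm bare}|=\mathcal{O}(\ln R)$. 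It still gives your contradiction at $k=R$, but only for the whole family of small cutoffs, not at a given $\varepsilon$.

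Second, drop the ``more generally'' class. ``Built from local spectral data together with elementary terms'' does not define a class on which the claim holds. The Neumann Jost function of the same auxiliary problem contains the factor $\Gamma(1-2ik)$, see \eqref{eq:app_gauss_sum}, and $|\Gamma(1-2ik)|^{-2}=\sinh(2\pi k)/(2\pi k)$. So the exact weight is itself expressible through natural scattering data of $A_N$. What your argument actually proves is the statement for $\rho_{\rm bare}=\delta_N'/\pi$, or more broadly for any weight that is real-analytic on $(0,\infty)$ and $o(e^{2\pi k})$ as $k\to\infty$. That is exactly what the corollary asserts. The ``nonlocal data'' gloss in your last step remains interpretation, as it does in the paper.
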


The point of Proposition~\ref{prop:bare_trace_nogo} is limited but sharp. The large-$k$ asymptotics show that the bare auxiliary scattering density is not the measure appearing in the exact disk amplitude. Since the exact observable is supported only on the finite band $k\in[0,R]$, this argument should be read as an obstruction to recovering the required cap-gluing information from bare local scattering data alone, rather than as a standalone deep topological theorem. What the boundary-state matrix element supplies is precisely the additional geometric input carried by the disk--trumpet gluing.

\section{Open-channel contour projector and branch difference}
\label{sec:contour_projector}

While the cap projection derives the exact Plancherel measure, the bare continuous spectrum of the auxiliary operator extends to $k \to \infty$. However, the exact closed-channel target \eqref{eq:Z_disk_k} possesses an upper bound $k \le R$ and requires the difference between two energy branches, $E_-(k)$ and $E_+(k)$. To reconstruct these features, we first define the target physical spectral function and then demonstrate that it admits a natural contour representation on the two-sheeted spectral curve.

What is established in this section is a contour realization of the target branch subtraction that matches the geometry of the finite-cutoff spectral curve in a direct way. We do not claim that this contour formula is forced among all possible functional calculi reproducing the same target. The point is narrower: the third-kind differential packages the two branch contributions into two simple poles with opposite residues, and for that reason it mirrors the oriented subtraction between the two sheets of the finite-cutoff spectral curve in the most direct way.

We define the physical spectral evolution function $F_\beta(k)$ that enforces both the compact support and the required branch difference:
\begin{equation}
\label{eq:F_beta_def}
F_\beta(k) := \chi_{[0, R]}(k) \Big( e^{-\beta E_-(k)} - e^{-\beta E_+(k)} \Big) \,.
\end{equation}
By the spectral theorem, evaluating this function on the Neumann auxiliary operator $A_N$ defines the physical evolution operator $F_\beta(A_N)$. With a slight abuse of notation, $F_\beta(A_N)$ denotes the spectral functional obtained from $F_\beta(k)$ after the identification $k=\sqrt{\lambda}$ on the spectrum of $A_N$. To construct its analytic representation, we decompose the energy branches into a constant center and a momentum-dependent width. We define the branch center $\mathcal{E}_0 := \phi_r / \varepsilon^2$, and encode the spectral width by the bounded Borel function
\begin{equation}
b(\lambda):=\chi_{[0,R^2]}(\lambda)\,\frac{\sqrt{\phi_r^2-\varepsilon^2\lambda}}{\varepsilon^2} \,.
\end{equation}
We then define the associated bounded positive operator
\begin{equation}
\label{eq:B_operator}
B:=b(A_N)\,.
\end{equation}
Acting on the exact basis $A_N|k,N\rangle = k^2|k,N\rangle$, this gives
\begin{equation}
b(k)=\chi_{[0,R^2]}(k^2)\,\frac{\sqrt{\phi_r^2-\varepsilon^2 k^2}}{\varepsilon^2} \,.
\end{equation}
The two energy branches are thus algebraically recast as $E_\pm(k) = \mathcal{E}_0 \pm b(k)$. Since $b(k)$ has the same dimension as a boundary energy, the spectral variable $\zeta$ is introduced with this dimension, and the spectrum of $B$ lies in the interval
\begin{equation}
0\le b(k)\le \frac{\phi_r}{\varepsilon^2}=\frac{R}{\varepsilon} \,.
\end{equation}
This explains why the contour $\Gamma$ (introduced below) is chosen to enclose the real segment $[-R/\varepsilon, R/\varepsilon]$.

Rather than imposing the relative minus sign ad hoc, we show it emerges from the complex geometry of the finite-cutoff spectral curve. We introduce the canonical third-kind differential on the complex $\zeta$-plane:
\begin{equation}
\label{eq:third_kind_diff}
\omega_B(\zeta) = d\log\frac{\zeta - B}{\zeta + B} = \frac{2B \, d\zeta}{\zeta^2 - B^2} \,.
\end{equation}
This operator-valued differential possesses simple poles at $\zeta = \pm B$ with residues $\pm 1$.

\begin{proposition}[Contour representation]
\label{prop:contour_rep}
The physical evolution operator $F_\beta(A_N)$ is represented by the contour integral of the third-kind differential:
\begin{equation}
\label{eq:contour_operator}
F_\beta(A_N) = \frac{1}{2\pi i} \oint_\Gamma e^{-\beta(\mathcal{E}_0 - \zeta)} \, d\log\frac{\zeta - B}{\zeta + B} \,,
\end{equation}
where $\Gamma$ is a positively oriented simple closed curve enclosing the real interval $[-R/\varepsilon, R/\varepsilon]$.
\end{proposition}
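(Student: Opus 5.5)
The plan is to reduce the operator identity to a scalar residue computation through the functional calculus of Remark~\ref{rem:sa_calculus}, then lift it back with the spectral theorem. First I will rewrite the integrand without the logarithm. By \eqref{eq:third_kind_diff},
\[
d\log\frac{\zeta-B}{\zeta+B}=2B\,(\zeta^2-B^2)^{-1}\,d\zeta .
\]
The residues at $\zeta=\pm B$ sum to zero, so this form is single-valued and no branch choice for $\log$ enters; I will work only with this rational expression. Since $B=b(A_N)$ is bounded, self-adjoint and has $\sigma(B)\subset[0,R/\varepsilon]$, we have $\sigma(B^2)\subset[0,R^2/\varepsilon^2]$. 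For every $\zeta\in\Gamma$ the point $\zeta$ lies outside $[-R/\varepsilon,R/\varepsilon]$, so $\zeta^2\notin\sigma(B^2)$. Hence $(\zeta^2-B^2)^{-1}=g_\zeta(A_N)$ with $g_\zeta(\lambda)=(\zeta^2-b(\lambda)^2)^{-1}$, and this is bounded uniformly for $\zeta$ on the compact curve $\Gamma$. The right-hand side of \eqref{eq:contour_operator} is therefore a norm-convergent Bochner integral of bounded operators.

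Next I will interchange the contour integral with the spectral integral. Writing $A_N=\int\lambda\,dP(\lambda)$, the integrand equals $\int h(\zeta,\lambda)\,dP(\lambda)$ with
\[
h(\zeta,\lambda)=e^{-\beta(\mathcal{E}_0-\zeta)}\,\frac{2b(\lambda)}{\zeta^2-b(\lambda)^2}.
\]
This $h$ is jointly continuous in $\zeta$ and Borel in $\lambda$, and bounded on $\Gamma\times\sigma(A_N)$. A Fubini argument, applied to the matrix elements $\langle\phi,\cdot\,\psi\rangle$ and the complex measures $d\langle\phi,P(\lambda)\psi\rangle$, then gives
\[
\text{RHS}=\int G(\lambda)\,dP(\lambda),\qquad G(\lambda):=\frac{1}{2\pi i}\oint_\Gamma h(\zeta,\lambda)\,d\zeta .
\]
It then suffices to show that $G(\lambda)$ equals $F_\beta(k)$ with $k=\sqrt\lambda$ for every $\lambda\ge0$.

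The scalar computation comes next. Fix $\lambda$ and set $b=b(\lambda)\in[0,R/\varepsilon]$. Consider first the case $b>0$, i.e.\ $0\le\lambda<R^2$. Then the integrand is meromorphic inside $\Gamma$ and its only singularities there are simple poles at $\zeta=\pm b$. The residue at $\zeta=b$ is $e^{-\beta(\mathcal{E}_0-b)}$, and the residue at $\zeta=-b$ is $-e^{-\beta(\mathcal{E}_0+b)}$. By the residue theorem,
\[
G(\lambda)=e^{-\beta E_-(k)}-e^{-\beta E_+(k)},
\]
using $E_\pm=\mathcal{E}_0\pm b(k)$. Now consider the case $b=0$, which covers both $\lambda>R^2$ (removed by the indicator in $b$) and the band edge $\lambda=R^2$. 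There the integrand vanishes identically, so $G=0$. This matches $F_\beta$ in both situations: outside the band because of the factor $\chi_{[0,R]}$ in \eqref{eq:F_beta_def}, and at the edge because $E_+(R)=E_-(R)$. The value at the single point $\lambda=R^2$ is in any case immaterial unless it is an eigenvalue, and the pointwise identity covers that possibility as well.

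The step needing the most care is the degenerate regime $b\to0$, where the two poles collide. There I would check explicitly that no contribution survives, since the numerator $2b$ kills the double pole. I would also verify the uniform bound that justifies the interchange. For the bound, I plan to use the distance $d=\operatorname{dist}(\Gamma,[-R/\varepsilon,R/\varepsilon])>0$. Because $|\zeta-b|,|\zeta+b|\ge d$ for all $\zeta\in\Gamma$, we get
\[
|h|\le \frac{2(R/\varepsilon)\,\sup_{\Gamma}|e^{\beta\zeta}|\,e^{-\beta\mathcal{E}_0}}{d^2},
\]
uniformly in $\lambda$. This bound closes the argument and, with it, the proof of the proposition.
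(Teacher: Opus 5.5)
Your proposal is correct and follows the same route as the paper: you reduce the identity to a scalar residue computation at each spectral value. There the simple poles at $\zeta=\pm b$, with residues $\pm1$, produce $e^{-\beta E_-}-e^{-\beta E_+}$, and the case $b=0$ gives zero. Beyond this, you supply what the paper leaves implicit (the uniform resolvent bound and the Fubini interchange with the spectral measure, instead of acting on the non-normalizable $|k,N\rangle$), and you correctly treat the band edge $k=R$ as part of the $b=0$ case, whereas the paper's proof loosely asserts $b(k)>0$ on all of $[0,R]$.
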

\begin{proof}
We evaluate the integral's action on an arbitrary state $|k,N\rangle$ by substituting the eigenvalue $b(k)$. For $0 \le k \le R$, $b(k)$ is positive. The poles $\zeta = \pm b(k)$ lie securely within $\Gamma$. By the Cauchy residue theorem, the integral evaluates to the sum of the integrand at the poles weighted by their respective residues $+1$ and $-1$, yielding $(e^{-\beta(\mathcal{E}_0 - b(k))} - e^{-\beta(\mathcal{E}_0 + b(k))})|k,N\rangle$, which exactly recovers $F_\beta(k)|k,N\rangle$. For $k > R$, $b(k) = 0$. The poles degenerate at the origin, and the differential $d\log(\zeta/\zeta)$ vanishes identically, properly enforcing the band truncation $\chi_{[0, R]}(k) = 0$.
\end{proof}

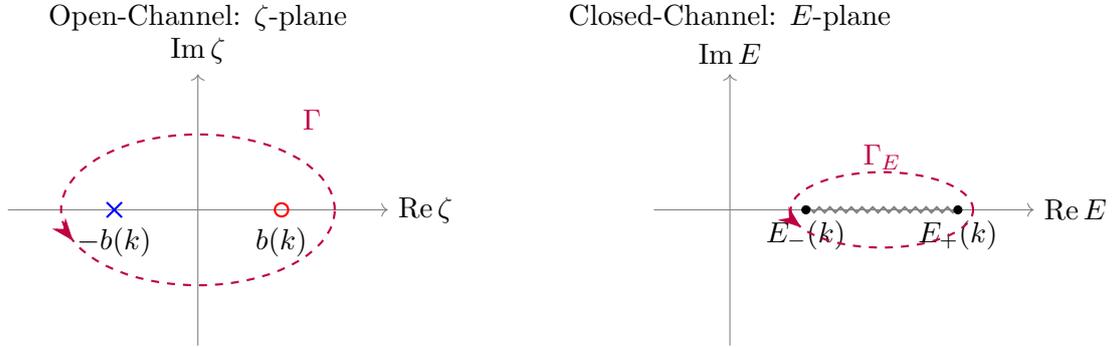
\begin{figure}[htbp]
\centering
\begin{tikzpicture}[
    decoration={markings, mark=at position 0.55 with {\arrow{Stealth[length=3mm]}}}
]

\begin{scope}[xshift=-3.5cm]
\node at (0, 2.55) {Open-Channel: $\zeta$-plane};
\draw[->, gray] (-2.5, 0) -- (2.5, 0) node[right, black] {$\mathrm{Re}\,\zeta$};
\draw[->, gray] (0, -1.8) -- (0, 1.8) node[above, black] {$\mathrm{Im}\,\zeta$};
\draw[thick, blue] (-1.2, -0.1) -- (-1.0, 0.1);
\draw[thick, blue] (-1.2, 0.1) -- (-1.0, -0.1);
\node[below] at (-1.1, -0.1) {$-b(k)$};

\draw[thick, red] (1.1, 0) circle (2.5pt);
\node[below] at (1.1, -0.1) {$b(k)$};

\draw[thick, dashed, purple, postaction={decorate}] (0,0) ellipse (1.8 and 1);
\node[purple] at (1.5, 1.2) {$\Gamma$};
\end{scope}

\begin{scope}[xshift=3.5cm]
\node at (0, 2.55) {Closed-Channel: $E$-plane};
\draw[->, gray] (-1, 0) -- (4, 0) node[right, black] {$\mathrm{Re}\,E$};
\draw[->, gray] (0, -1.8) -- (0, 1.8) node[above, black] {$\mathrm{Im}\,E$};
\draw[thick, gray, decorate, decoration={zigzag, segment length=4pt, amplitude=1pt}] (1, 0) -- (3, 0);
\draw[fill=black] (1,0) circle (1.5pt) node[below] {$E_-(k)$};
\draw[fill=black] (3,0) circle (1.5pt) node[below] {$E_+(k)$};

\draw[thick, dashed, purple, postaction={decorate}]
    (3.2, 0) arc (0:180:1.2 and 0.5) arc (180:360:1.2 and 0.5);
\node[purple] at (2, 0.7) {$\Gamma_E$};
\end{scope}

\end{tikzpicture}
\caption{Equivalence of the spectral representations. Left: In the open-channel $\zeta$-plane, the simple poles of the third-kind differential at $\zeta = \pm b(k)$ are enclosed by a dashed contour $\Gamma$. The left and right poles are marked by a cross (blue) and a circle (red) respectively, representing residues of opposite signs. Right: This operation maps to the oriented subtraction between the two Riemann sheets across the branch cut (gray zigzag) on the finite-cutoff closed-channel spectral curve.}
\label{fig:contours}
\end{figure}

The physical and algebraic implications of this contour representation must be carefully delineated. The relative minus sign generated by the opposite residues in \eqref{eq:contour_operator} does not represent negative-norm states, nor does it require an indefinite inner product metric on the underlying Neumann Hilbert space. The minus sign is best understood geometrically as an oriented subtraction between the two sheets of the finite-cutoff spectral Riemann surface. It is a proper feature of the observable amplitude after analytic continuation and spectral projection, rather than a modification of the canonical Hilbert space structure.

Finally, we do not claim uniqueness of the contour representation among all functional calculi reproducing the same spectral target. What singles out the present form is that it realizes the branch subtraction with the smallest amount of extra structure: two branches are encoded by two poles, and the relative minus sign is carried by their opposite residues. In that sense, the contour formula is not merely a convenient rewriting, but the most direct meromorphic translation of the oriented sheet subtraction already present in the finite-cutoff spectral curve.

\section{Disk reconstruction and structural consequences}
\label{sec:reconstruction}

With the intrinsic open-channel ingredients and the cap projection established, we now reconstruct the full finite-cutoff disk partition function and record several structural consequences of this reconstruction. The reconstruction step closes only after combining the intrinsic open-channel derivation with the exact geometric trumpet and cap data.

The structural consequences derived in this section should be read in the same spirit. In particular, the sampled Hilbert-space formulation and the branch-doubled transfer representation are not introduced as independent microscopic models. They are alternative representations of the same finite-cutoff geodesic sector, arising from the compact spectral support and from the branch-resolved evolution reconstructed above.

\subsection{Operator reconstruction and the infinite-cutoff limit}
\label{subsec:operator_reconstruction}

The gravitational amplitude is formulated as a boundary-state matrix element, rather than an unregulated continuous trace. To manifest this operator structure, we first recall that the exact finite-cutoff trumpet amplitude is already known in the length representation and takes the spectral form \cite{Iliesiu:2020zld}
\begin{equation}
\label{eq:trumpet_spectral_input}
Z^{\mathrm{tr}}_{\varepsilon}(\beta,b) = \int_0^\infty dk \, \langle b|k,N\rangle \, F_\beta(k) \,,
\end{equation}
where $F_\beta(k)$ is the physical spectral evolution function defined in \eqref{eq:F_beta_def}, and the geometric length--momentum kernel is the rigid cosine, $\langle b|k,N\rangle = \sqrt{2/\pi}\cos(bk)$.

We now introduce a generalized geometric trumpet state $|\Omega_{\rm geom}\rangle$ by requiring that the finite-cutoff trumpet amplitude be represented as a matrix element of the physical evolution operator:
\begin{equation}
\label{eq:trumpet_operator_definition}
Z^{\mathrm{tr}}_{\varepsilon}(\beta,b) = \langle b|F_\beta(A_N)|\Omega_{\rm geom}\rangle \,.
\end{equation}
Inserting the continuous spectral resolution of the Neumann basis, $\mathbb{I} = \int_0^\infty dk \, |k,N\rangle\langle k,N|$, gives
\begin{equation}
Z^{\mathrm{tr}}_{\varepsilon}(\beta,b) = \int_0^\infty dk \, \langle b|k,N\rangle \, F_\beta(k) \, \langle k,N|\Omega_{\rm geom}\rangle \,.
\end{equation}
Comparing this with the known trumpet amplitude \eqref{eq:trumpet_spectral_input}, we read off
\begin{equation}
\label{eq:omega_geom_read_off}
\langle k, N | \Omega_{\rm geom} \rangle = 1 \,.
\end{equation}
This unit overlap simply records that the trumpet amplitude carries no additional spectral weight beyond $F_\beta(k)$ and the rigid cosine kernel. The object $|\Omega_{\rm geom}\rangle$ should therefore be regarded as a generalized state used to encode the undeformed trumpet geometry at the spectral level, rather than as a normalizable vector in the Hilbert space.

The exact disk partition function is then mathematically realized as the matrix element of the physical evolution operator $F_\beta(A_N)$ between the topological cap state and the geometric trumpet state:
\begin{equation}
\label{eq:matrix_element_def}
Z_\varepsilon(\beta) = \langle \text{cap} | F_\beta(A_N) | \Omega_{\rm geom} \rangle \,.
\end{equation}
Inserting a complete resolution of the Neumann continuous spectrum, $\int_0^\infty dk |k, N\rangle \langle k, N| = \mathbb{I}$, we obtain the $k$-space representation:
\begin{equation}
\label{eq:reconstruction_integral}
Z_\varepsilon(\beta) = \int_0^\infty dk \, \langle \text{cap} | k, N \rangle \langle k, N | F_\beta(A_N) | \Omega_{\rm geom} \rangle \,.
\end{equation}
Substituting the topological measure $\langle \text{cap} | k, N \rangle = k \sinh(2\pi k)$ derived in \eqref{eq:cap_overlap_read_off}, the diagonal operator action $\langle k, N | F_\beta(A_N) = F_\beta(k) \langle k, N |$ established in \eqref{eq:F_beta_def}, and the unit overlap \eqref{eq:omega_geom_read_off}, the matrix element evaluates to
\begin{equation}
\label{eq:integral_with_chi}
Z_\varepsilon(\beta) = \int_0^\infty dk \, k \sinh(2\pi k) \, \chi_{[0, R]}(k) \Big( e^{-\beta E_-(k)} - e^{-\beta E_+(k)} \Big) \,.
\end{equation}
The physical indicator function truncates the integration domain to the finite spectral band $[0, R]$. The exact amplitude reduces to
\begin{equation}
\label{eq:final_Z_reconstructed}
Z_\varepsilon(\beta) = \int_0^R dk \, k \sinh(2\pi k) \Big( e^{-\beta E_-(k)} - e^{-\beta E_+(k)} \Big) \,,
\end{equation}
which precisely reproduces the exact closed-channel target \eqref{eq:Z_disk_k}. This completes the open-channel operator reconstruction.

To sharpen the meaning of the branch-resolved structure, it is useful to record a direct boundary-side consequence of the exact reconstructed formula. The finite-cutoff disk amplitude cannot be represented as the ordinary thermal trace of any single lower-bounded self-adjoint, $\beta$-independent Hamiltonian.

\begin{theorem}[Thermal-trace obstruction for compact-support branch-difference amplitudes]
\label{thm:general_trace_obstruction}
Let $R>0$, let $w(k)\ge 0$ be a continuous function on $[0,R]$ that is not identically zero, and let $b(k)\ge 0$ be a continuous function on $[0,R]$ that is not identically zero. For a fixed constant $\mathcal{E}_0\in\mathbb{R}$, define
\begin{equation}
\label{eq:general_branch_difference_amplitude}
Z(\beta)
:=
2\int_0^R dk\, w(k)\, e^{-\beta \mathcal{E}_0}\sinh\!\big(\beta b(k)\big),
\qquad \beta>0.
\end{equation}
Then
\begin{equation}
Z(\beta)=C\,\beta+\mathcal{O}(\beta^2)
\qquad (\beta\to0^+),
\end{equation}
with
\begin{equation}
C:=2\int_0^R dk\, w(k)b(k)>0.
\end{equation}
In particular,
\begin{equation}
\lim_{\beta\to0^+} Z(\beta)=0.
\end{equation}
Consequently, there does not exist a self-adjoint operator $H$ bounded from below such that $e^{-\beta H}$ is trace class for every $\beta>0$ and
\begin{equation}
Z(\beta)=\mathrm{Tr}\,e^{-\beta H}
\qquad \forall \beta>0.
\end{equation}
\end{theorem}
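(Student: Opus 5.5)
The argument splits into two parts: first the small-$\beta$ asymptotics of $Z(\beta)$, then a contradiction with the behaviour of a genuine thermal trace as $\beta\to0^+$.

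\textbf{Small-$\beta$ expansion.} Set $M:=\max_{[0,R]}b$ and $W:=\max_{[0,R]}w$, both finite by continuity on a compact interval. For $0<\beta\le 1$, Taylor's theorem with remainder gives $|\sinh x - x|\le \tfrac{x^3}{6}\cosh x$ for $x\ge0$, and $|e^{-\beta\mathcal{E}_0}-1|\le \beta|\mathcal{E}_0|e^{|\mathcal{E}_0|}$. Write
\begin{equation}
e^{-\beta\mathcal{E}_0}\sinh(\beta b)=\beta b+\big(e^{-\beta\mathcal{E}_0}-1\big)\sinh(\beta b)+\big(\sinh(\beta b)-\beta b\big).
\end{equation}
The second term is bounded by $\beta^2|\mathcal{E}_0|e^{|\mathcal{E}_0|}M\cosh M$, and the third by $\beta^3 M^3\cosh M/6$, uniformly in $k$. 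Integrating against $2w$ over $[0,R]$ then gives $Z(\beta)=C\beta+\mathcal{O}(\beta^2)$, with an explicit constant of order $2RW(\dots)$.

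\textbf{Positivity of $C$.} This is the one step that needs care, since $w$ and $b$ may have disjoint supports. As stated, the hypotheses do not guarantee $C>0$: for example, $w$ supported near $0$ and $b$ supported near $R$ give $C=0$. I would therefore prove positivity under the additional condition $wb\not\equiv0$, which is immediate in the application $w=k\sinh(2\pi k)$, $b(k)=\sqrt{\phi_r^2-\varepsilon^2k^2}/\varepsilon^2$, where both are positive on $(0,R)$. Under that condition, $wb$ is continuous, nonnegative and positive at some point, hence positive on an open interval, so $C>0$. The obstruction below in fact only uses $Z(\beta)\to0$ and $Z\not\equiv0$, so if $C$ happened to vanish one could still conclude. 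In that case $Z(\beta)>0$ for all $\beta>0$ whenever $wb\not\equiv0$, because $\sinh(\beta b)\ge0$. Finally, the limit $\lim_{\beta\to0^+}Z(\beta)=0$ follows directly from the expansion.

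\textbf{Obstruction.} Suppose $H$ is self-adjoint and bounded below, $e^{-\beta H}$ is trace class for every $\beta>0$, and $\mathrm{Tr}\,e^{-\beta H}=Z(\beta)$. Trace class implies compact, so $H$ has purely discrete spectrum $\lambda_1\le\lambda_2\le\cdots$, listed with multiplicity. Then $\mathrm{Tr}\,e^{-\beta H}=\sum_n e^{-\beta\lambda_n}$.

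\emph{Case 1: the Hilbert space is nonzero.} Then there is at least one eigenvalue $\lambda_1$, and every term is positive, so $\mathrm{Tr}\,e^{-\beta H}\ge e^{-\beta\lambda_1}\to1$ as $\beta\to0^+$. More generally, by monotone convergence the trace tends to $\dim\mathcal{H}\in[1,\infty]$. This contradicts $Z(\beta)\to0$.

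\emph{Case 2: the Hilbert space is zero.} Then $\mathrm{Tr}\,e^{-\beta H}\equiv0$, while $Z(\beta)=C\beta+\mathcal{O}(\beta^2)$ with $C>0$ is nonzero for small $\beta$. This is again a contradiction.

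The main subtlety is therefore not analytic but the nondegeneracy condition $wb\not\equiv0$ needed for $C>0$. I would make it explicit, or note that it holds in the disk application.
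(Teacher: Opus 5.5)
Your proof is correct and follows the paper's route: a uniform Taylor expansion on the compact band gives $Z(\beta)=C\beta+\mathcal{O}(\beta^2)$, and then monotone convergence (equivalently, your bound $\mathrm{Tr}\,e^{-\beta H}\ge e^{-\beta\lambda_1}\to 1$) shows that any genuine trace tends to a positive, possibly infinite, limit as $\beta\to0^+$, contradicting $Z(\beta)\to0$. You are also right that $C>0$ needs $wb\not\equiv0$, not merely $w,b\not\equiv0$. The paper's proof asserts positivity from the latter, which fails when $w$ and $b$ have disjoint supports. One caveat on your aside: $C=0$, $wb\equiv0$ and $Z\equiv0$ are all equivalent, so when $C$ vanishes the contradiction survives only on a nonzero Hilbert space, and the trivial Hilbert space then realizes $Z$.
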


\begin{proof}
Because $[0,R]$ is compact and both $w(k)$ and $b(k)$ are continuous, the small-$\beta$ expansions
\begin{equation}
e^{-\beta\mathcal{E}_0}=1-\beta\mathcal{E}_0+\mathcal{O}(\beta^2),
\qquad
\sinh(\beta b(k))=\beta b(k)+\mathcal{O}(\beta^3)
\end{equation}
hold uniformly in $k\in[0,R]$. Substituting these into \eqref{eq:general_branch_difference_amplitude} yields
\begin{equation}
Z(\beta)
=
2\int_0^R dk\, w(k)\big(1-\beta\mathcal{E}_0+\mathcal{O}(\beta^2)\big)\big(\beta b(k)+\mathcal{O}(\beta^3)\big)
=
C\,\beta+\mathcal{O}(\beta^2),
\end{equation}
where
\begin{equation}
C=2\int_0^R dk\, w(k)b(k)>0
\end{equation}
because $w(k)\ge0$, $b(k)\ge0$, and neither function is identically zero. Hence $\lim_{\beta\to0^+}Z(\beta)=0$.

Now assume for contradiction that there exists a self-adjoint operator $H$ bounded from below such that $e^{-\beta H}$ is trace class for every $\beta>0$ and
\begin{equation}
Z(\beta)=\mathrm{Tr}\,e^{-\beta H}.
\end{equation}
Let
\begin{equation}
E_*:=\inf\sigma(H),
\qquad
K:=H-E_*\ge0.
\end{equation}
By the spectral theorem, there exists a positive measure $\nu$ on $[0,\infty)$ such that
\begin{equation}
\mathrm{Tr}\,e^{-\beta H}
=
e^{-\beta E_*}\int_{[0,\infty)} e^{-\beta\lambda}\,d\nu(\lambda).
\end{equation}
Since $e^{-\beta\lambda}\uparrow1$ as $\beta\to0^+$ for every $\lambda\ge0$, the monotone convergence theorem gives
\begin{equation}
\lim_{\beta\to0^+} e^{\beta E_*}Z(\beta)
=
\lim_{\beta\to0^+}\int_{[0,\infty)} e^{-\beta\lambda}\,d\nu(\lambda)
=
\nu([0,\infty)).
\end{equation}
Because $Z(\beta)$ is not identically zero, the measure $\nu$ is nonzero, hence $\nu([0,\infty))$ is positive or $+\infty$. On the other hand, $\lim_{\beta\to0^+}Z(\beta)=0$ implies
\begin{equation}
\lim_{\beta\to0^+} e^{\beta E_*}Z(\beta)=0,
\end{equation}
which is a contradiction. Therefore no such $H$ exists.
\end{proof}

\begin{corollary}[The finite-cutoff JT disk amplitude as a special case]
\label{cor:jt_trace_obstruction}
The exact finite-cutoff JT disk amplitude \eqref{eq:final_Z_reconstructed} is a special case of Theorem~\ref{thm:general_trace_obstruction}, obtained by taking
\begin{equation}
w(k)=k\sinh(2\pi k),
\qquad
b(k)=\frac{\sqrt{\phi_r^2-\varepsilon^2 k^2}}{\varepsilon^2}
\quad (0\le k\le R),
\end{equation}
and therefore cannot be represented as the ordinary thermal trace of any single lower-bounded self-adjoint $\beta$-independent Hamiltonian.
\end{corollary}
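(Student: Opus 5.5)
The plan is a reduction to Theorem~\ref{thm:general_trace_obstruction}: first put \eqref{eq:final_Z_reconstructed} into the form \eqref{eq:general_branch_difference_amplitude}, then check the theorem's hypotheses for the stated $w$ and $b$.

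For the rewriting, use the branch decomposition of section~\ref{sec:contour_projector}. There $E_\pm(k)=\mathcal{E}_0\pm b(k)$, with $\mathcal{E}_0=\phi_r/\varepsilon^2$ and $b(k)=\sqrt{\phi_r^2-\varepsilon^2k^2}/\varepsilon^2$ on $[0,R]$. This matches \eqref{eq:E_pm_solutions} because $R=\phi_r/\varepsilon$. Factoring out $e^{-\beta\mathcal{E}_0}$ gives
\begin{equation}
e^{-\beta E_-(k)}-e^{-\beta E_+(k)}=e^{-\beta\mathcal{E}_0}\bigl(e^{\beta b(k)}-e^{-\beta b(k)}\bigr)=2e^{-\beta\mathcal{E}_0}\sinh\bigl(\beta b(k)\bigr).
\end{equation}
Hence \eqref{eq:final_Z_reconstructed} is exactly \eqref{eq:general_branch_difference_amplitude} with $w(k)=k\sinh(2\pi k)$ and this choice of $b$. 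Note that $\mathcal{E}_0$ is a real constant, as the theorem requires.

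Next, verify the hypotheses on $[0,R]$ with $R>0$, which holds because $\phi_r,\varepsilon>0$.
\begin{itemize}
\item $w$ is continuous and nonnegative, since $k\ge0$ and $\sinh(2\pi k)\ge0$ there. It is not identically zero; for instance $w(R)=R\sinh(2\pi R)>0$.
\item $b$ is continuous because $\phi_r^2-\varepsilon^2k^2\ge0$ on $[0,R]$, so the square root of a continuous nonnegative function is continuous. Also $b\ge0$, and $b(0)=\phi_r/\varepsilon^2>0$, so $b\not\equiv0$.
\end{itemize}

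With the hypotheses in place, Theorem~\ref{thm:general_trace_obstruction} applies directly. It gives $Z_\varepsilon(\beta)=C\beta+\mathcal{O}(\beta^2)$ with $C>0$, and it rules out any self-adjoint $H$, bounded below, with $e^{-\beta H}$ trace class for all $\beta>0$ and $Z_\varepsilon(\beta)=\mathrm{Tr}\,e^{-\beta H}$. This is the claimed statement.

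I expect no real obstacle. The only point needing care is that the band edge $k=R$ is included, where $b(R)=0$ and the two branches coincide. Continuity of $b$ at that endpoint still holds, so nothing breaks there. The reduction also implicitly uses $\beta$-independence of $H$ and the exact (not truncated) identification of $E_\pm$; both are already built into the reconstructed formula.
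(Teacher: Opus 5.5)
Your proposal is correct and is the reduction the paper intends: write the branch difference as $2e^{-\beta\mathcal{E}_0}\sinh(\beta b(k))$ using $E_\pm(k)=\mathcal{E}_0\pm b(k)$, then invoke Theorem~\ref{thm:general_trace_obstruction}. Your explicit hypothesis checks (continuity on $[0,R]$, nonnegativity, $w(R)>0$, $b(0)=\phi_r/\varepsilon^2>0$) supply details the paper leaves implicit.
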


As a consistency check, we evaluate the infinite-cutoff limit $\varepsilon \to 0$. The momentum upper bound diverges, $R = \phi_r/\varepsilon \to \infty$, restoring the unbounded momentum half-line. For the energy branches, we algebraically isolate the finite and divergent parts. The lower branch $E_-(k)$ is rewritten by rationalizing the numerator:
\begin{equation}
E_-(k) = \frac{\phi_r - \sqrt{\phi_r^2 - \varepsilon^2 k^2}}{\varepsilon^2} = \frac{\phi_r^2 - (\phi_r^2 - \varepsilon^2 k^2)}{\varepsilon^2 \left( \phi_r + \sqrt{\phi_r^2 - \varepsilon^2 k^2} \right)} = \frac{k^2}{\phi_r + \sqrt{\phi_r^2 - \varepsilon^2 k^2}} \,.
\end{equation}
Taking the limit $\varepsilon \to 0$, the square root smoothly limits to $\phi_r$, recovering the standard continuous spectrum:
\begin{equation}
\lim_{\varepsilon \to 0} E_-(k) = \frac{k^2}{2\phi_r} \,.
\end{equation}
Conversely, the upper branch $E_+(k)$ diverges identically:
\begin{equation}
E_+(k) = \frac{\phi_r + \sqrt{\phi_r^2 - \varepsilon^2 k^2}}{\varepsilon^2} = \frac{2\phi_r}{\varepsilon^2} - \frac{k^2}{2\phi_r} + \mathcal{O}(\varepsilon^2) \,.
\end{equation}
Its thermal weight is exponentially suppressed:
\begin{equation}
e^{-\beta E_+(k)} \sim e^{-2\beta\phi_r/\varepsilon^2} e^{+\beta k^2/(2\phi_r)} \to 0 \qquad (\varepsilon \to 0) \,.
\end{equation}
Applying these limits to the exact partition function \eqref{eq:final_Z_reconstructed}, the upper branch difference and the band truncation completely vanish. We cleanly recover the standard asymptotic Schwarzian disk amplitude:
\begin{equation}
\lim_{\varepsilon \to 0} Z_\varepsilon(\beta) = \int_0^\infty dk \, k \sinh(2\pi k) e^{-\beta \frac{k^2}{2\phi_r}} \,.
\end{equation}

\subsection{Finite-cutoff geodesic sector as a sampled Hilbert space}
\label{subsec:sampled_hilbert_space}

The operator reconstruction established above has an immediate representation-theoretic consequence for the geodesic sector. At finite cutoff, the physical geodesic representation becomes bandlimited. As a result, the same sector admits a Nyquist resolution scale and can be written in a sampled Hilbert-space form.

To make this statement concrete, we begin with the representation of the physical projection in the geometric geodesic-length basis. The physical Hilbert space is restricted to the bandlimited subspace $k \in [0, R]$ by the spectral projector $\Pi_{\rm phys} = \chi_{[0, R^2]}(A_N)$. We define the projected physical geodesic state as
\begin{equation}
|b; R\rangle := \Pi_{\rm phys} |b\rangle \,.
\end{equation}
The reproducing kernel of the physical subspace, which evaluates the overlap of these projected states, is explicitly given by
\begin{equation}
\label{eq:reproducing_kernel_def}
K_R(b, b') := \langle b; R | b'; R \rangle = \int_0^R dk \, \langle b | k, N \rangle \langle k, N | b' \rangle \,.
\end{equation}
Substituting the rigid cosine kernel $\langle b | k, N \rangle = \sqrt{2/\pi}\cos(bk)$ and utilizing the trigonometric product identity, the integral evaluates exactly to
\begin{equation}
\label{eq:sinc_kernel}
K_R(b, b') = \frac{1}{\pi} \int_0^R dk \Big[ \cos\big((b-b')k\big) + \cos\big((b+b')k\big) \Big] = \frac{\sin\big(R(b-b')\big)}{\pi(b-b')} + \frac{\sin\big(R(b+b')\big)}{\pi(b+b')} \,.
\end{equation}
In the standard unbounded limit $\varepsilon \to 0$ (where $R \to \infty$), the kernel limits distributionally to the standard representation $\delta(b-b')+\delta(b+b')$, and the Nyquist spacing $\Delta b_{\rm N}=\pi/R$ correspondingly tends to zero. In this sense, the sampled geodesic representation refines back to the usual continuous geodesic-length representation in the asymptotic Schwarzian limit. However, for any positive finite cutoff $\varepsilon > 0$, the momentum bound $R = \phi_r/\varepsilon$ is finite. This leads directly to the following structural consequence.

\begin{corollary}[Finite-cutoff geodesic resolution scale]
\label{cor:geodesic_resolution}
Because the geometric overlap is governed by the bandlimited reproducing kernel $K_R(b, b')$ rather than a Dirac delta distribution, ideal Dirac-localized geodesic distributions do not belong to the bandlimited physical subspace. Consequently, the finite cutoff induces a natural Nyquist resolution scale
\begin{equation}
\Delta b_{\rm N} = \frac{\pi}{R} = \frac{\pi\varepsilon}{\phi_r} \,,
\end{equation}
set by the bandlimit and realized operationally in the sampled geodesic sector \cite{Shannon:1949noise,Landau:1967density}.
\end{corollary}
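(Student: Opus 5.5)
The plan is to prove the corollary by showing that the physical subspace $\Pi_{\rm phys}\mathcal{H}$, written in the geodesic-length representation, is a reproducing-kernel Hilbert space of bandlimited functions. The two parts of the corollary then follow from standard Paley--Wiener and Shannon facts. I will use the rigid cosine kernel \eqref{eq:geometric_kernel} and the projector $\Pi_{\rm phys}=\chi_{[0,R^2]}(A_N)$ from Remark~\ref{rem:sa_calculus}.

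\textbf{Step 1: bandlimited form of physical states.} Take any $|\psi\rangle$ in the range of $\Pi_{\rm phys}$. Its length representation is
\begin{equation}
\psi(b)=\sqrt{2/\pi}\int_0^R dk\,\cos(bk)\,\tilde\psi(k), \qquad \tilde\psi\in L^2[0,R].
\end{equation}
This is an even entire function of exponential type at most $R$, and its restriction to the real line is square-integrable. So the physical subspace is isometric to the even Paley--Wiener space $PW_R^{\rm even}$. Evaluation at a point is a bounded functional on this space, and it is represented by $K_R(\cdot,b)$ from \eqref{eq:sinc_kernel}. One checks this by verifying the reproducing identity $\int_0^\infty db'\,K_R(b,b')\psi(b')=\psi(b)$ with the cosine Plancherel relation \eqref{eq:ortho_free}.

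\textbf{Step 2: Dirac-localized states are excluded.} Suppose a physical state $\psi$ equals $\delta(b-b_0)$. Pairing it with $K_R(\cdot,b)$ gives $\psi(b)=K_R(b,b_0)$, which is smooth. Evaluating the norm, $\langle\psi|\psi\rangle=K_R(b_0,b_0)=R/\pi+\sin(2Rb_0)/(2\pi b_0)$, which is finite. A Dirac distribution has no finite $L^2$ norm, so it is not in the space. Put another way, $\Pi_{\rm phys}|b_0\rangle=|b_0;R\rangle$ is a nontrivial smearing of $|b_0\rangle$ whenever $R<\infty$. In momentum space the cosine transform of $\delta(b-b_0)$ is $\sqrt{2/\pi}\cos(b_0k)$, which is not supported in $[0,R]$. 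So $\delta(b-b_0)$ is not a fixed point of $\Pi_{\rm phys}$.

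\textbf{Step 3: the Nyquist scale.} For the resolution scale I apply the Shannon sampling theorem to the even extension of $\psi$ on $\mathbb{R}$, whose Fourier support lies in $[-R,R]$. Sampling at spacing $\Delta b_{\rm N}=\pi/R$ reconstructs the function exactly:
\begin{equation}
\psi(b)=\sum_{n\in\mathbb{Z}}\psi(n\pi/R)\,\frac{\sin(Rb-n\pi)}{Rb-n\pi}.
\end{equation}
Because $\psi$ is even, this folds into a sum over $n\ge 0$ with symmetrized sinc functions, which are exactly the two terms of $K_R(b,n\pi/R)$ up to normalization. These symmetrized sincs form an orthogonal basis of the physical subspace. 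The spacing $\pi/R$ is critical in Landau's sense: sampling any sparser than this loses information.

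\textbf{Main obstacle.} The delicate step is the minimality claim in Step 3, namely that no coarser uniform sampling works. For this I will use Landau's necessary density bound \cite{Landau:1967density}, which gives density at least $R/\pi$. The evenness constraint has to be handled carefully so that it does not change the critical density on the half-line. I will do this by working entirely with the even extension to $\mathbb{R}$. Substituting $R=\phi_r/\varepsilon$ then gives $\Delta b_{\rm N}=\pi\varepsilon/\phi_r$.
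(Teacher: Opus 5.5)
Your argument is correct, and for what the corollary actually asserts it follows the paper's route with more of the functional analysis spelled out. The paper reads the statement directly off the closed form \eqref{eq:sinc_kernel}. For finite $R$ the kernel is a pair of sinc functions rather than its $R\to\infty$ limit $\delta(b-b')+\delta(b+b')$, so $\Pi_{\rm phys}$ genuinely smears $|b\rangle$, and $\pi/R$ is simply named as the Nyquist spacing of the band $[0,R]$. The operational content comes afterwards in Propositions~\ref{prop:shannon_sampling} and~\ref{prop:nyquist_basis}, which use the same even-extension folding and symmetrized-sinc basis as your Step 3. Your Paley--Wiener/reproducing-kernel identification and the finite value $K_R(b_0,b_0)=R/\pi+\sin(2Rb_0)/(2\pi b_0)$ justify the exclusion of Dirac-localized states more cleanly than the paper's one-line remark.

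You go beyond the paper only in the optimality claim, that no sparser sampling works. The paper does not prove this; it only cites Landau. As planned, this step does not go through. Landau's necessary-density theorem on $\mathbb{R}$ assumes the set samples all of $PW_R$. A symmetric set $\Lambda\cup(-\Lambda)$ that samples the even subspace need not sample odd functions, so passing to the even extension does not put you inside his hypotheses. There are two ways to repair it:
\begin{itemize}
\item Rerun Landau's concentration-operator argument directly on $\mathcal{H}^{(b)}_R$. The operator $\Pi_{\rm phys}\chi_{[0,L]}\Pi_{\rm phys}$ has trace $\int_0^L K_R(b,b)\,db=RL/\pi+O(1)$, and the usual Hilbert--Schmidt estimate shows that about $RL/\pi$ of its eigenvalues are close to $1$. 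This gives half-line lower density at least $R/\pi$.
\item Since you only need lattices $\{na\}_{n\ge 0}$, give an explicit counterexample. For $a>\pi/R$ and $\delta=R-\pi/a$, the function $\sin(\pi b/a)\sin^2(\delta b/2)/b$ is even, nonzero, square integrable and of exponential type $R$, so it lies in $\mathcal{H}^{(b)}_R$. It nevertheless vanishes at every $b=na$, $n\ge 0$.
\end{itemize}
Either way, this step is optional for the corollary as stated.
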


We emphasize that $\Delta b_{\rm N}$ is an operational resolution scale induced by the bandlimited reproducing kernel of the finite-cutoff geodesic sector. It is not an independent minimal-length postulate. Put differently, the sampled Hilbert-space structure is a consequence of the compact spectral support of the physical sector itself.

This is still more than a change of notation. In the present setting, the bandlimit is not imposed by hand on an abstract function space; it comes from the finite-cutoff geometry. The Nyquist scale therefore measures the finest geodesic-length resolution accessible within the physical subspace. At the same time, the sampled formulation should be read as a complete representation of that subspace, not as a replacement of the continuum by a new microscopic lattice theory.

Beyond this characteristic resolution scale, the finite momentum band mathematically endows the physical geodesic-length sector with a sampled Hilbert-space structure, completely and isometrically encoding the continuous representation into Nyquist-lattice data.

\begin{proposition}[Shannon reconstruction on the finite-cutoff geodesic sector]
\label{prop:shannon_sampling}
Any physical state $\psi(b) = \langle b|\psi\rangle$ in the finite-cutoff geodesic-length representation is exactly reconstructible from its discrete samples on the Nyquist lattice $b_n = n\pi/R$:
\begin{equation}
\label{eq:shannon_half_line}
\psi(b) = \psi(0)\operatorname{sinc}\left(\frac{R b}{\pi}\right) + \sum_{n=1}^\infty \psi\left(\frac{n\pi}{R}\right) \left[ \operatorname{sinc}\left(\frac{R b}{\pi} - n\right) + \operatorname{sinc}\left(\frac{R b}{\pi} + n\right) \right] \,,
\end{equation}
where the normalized sinc function is defined as $\operatorname{sinc}(u) := \sin(\pi u)/(\pi u)$.
\end{proposition}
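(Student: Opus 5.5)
The plan is to reduce the statement to the classical Whittaker--Kotelnikov--Shannon theorem on the full line by exploiting the parity-even structure of the cosine kernel. A physical state in the finite-cutoff geodesic sector is $\psi = \Pi_{\rm phys}\psi$. Inserting the Neumann resolution of the identity and the rigid cosine kernel \eqref{eq:geometric_kernel} gives the representation
\begin{equation}
\psi(b) = \sqrt{\frac{2}{\pi}}\int_0^R dk\, \tilde\psi(k)\cos(bk), \qquad \tilde\psi(k) := \langle k,N|\psi\rangle \in L^2(0,R).
\end{equation}
I would then extend $\psi$ to all real $b$ by the same formula. This extension is automatically even, and writing $\cos(bk) = \tfrac12(e^{ibk}+e^{-ibk})$ shows that it is the inverse Fourier transform of the even function $\hat\psi(k)$ on $[-R,R]$ equal to $\tilde\psi(|k|)$ (up to the constant $1/\sqrt{2\pi}$). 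Hence $\psi$ lies in the Paley--Wiener space $PW_R$ of $L^2(\mathbb{R})$ functions with Fourier support in $[-R,R]$.

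The second step is to invoke the Shannon theorem for $PW_R$ at the Nyquist spacing $\Delta b_{\rm N}=\pi/R$ from Corollary~\ref{cor:geodesic_resolution}:
\begin{equation}
\psi(b) = \sum_{n\in\mathbb{Z}} \psi\!\left(\frac{n\pi}{R}\right)\operatorname{sinc}\!\left(\frac{Rb}{\pi}-n\right).
\end{equation}
If a self-contained proof is wanted, I would expand $\hat\psi$ in the Fourier series $\{e^{-in\pi k/R}\}_{n\in\mathbb{Z}}$ on $[-R,R]$. Its coefficients are proportional to the samples $\psi(n\pi/R)$. Transforming back term by term, which is justified by $L^2$ convergence and Parseval, gives the sinc series. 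The same argument shows convergence both in $L^2$ and uniformly, since $|\psi(b)|\le \|\hat\psi\|_{L^1}$ controls pointwise errors on a bounded band. Parseval also gives the isometry $\|\psi\|^2 = (\pi/R)\sum_n|\psi(n\pi/R)|^2$, which is the ``isometric encoding'' claim made just before the proposition.

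The final step is to fold the bilateral series using evenness, $\psi(-n\pi/R)=\psi(n\pi/R)$. The $n=0$ term gives $\psi(0)\operatorname{sinc}(Rb/\pi)$. Each pair $\pm n$ with $n\ge1$ combines into $\psi(n\pi/R)\big[\operatorname{sinc}(Rb/\pi-n)+\operatorname{sinc}(Rb/\pi+n)\big]$, which is exactly \eqref{eq:shannon_half_line}. Restricting to $b\ge0$ then gives the half-line statement.

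The main obstacle, and the place needing care, is the first step. One must fix precisely what ``physical state in the geodesic-length representation'' means, so that $\psi(b)$ is genuinely a bandlimited $L^2$ function rather than a formal object like $|\Omega_{\rm geom}\rangle$. I would define the sector as the range of $\Pi_{\rm phys}$ transported by the unitary half-line cosine transform with kernel \eqref{eq:geometric_kernel}. I would also note that the reproducing kernel \eqref{eq:sinc_kernel} is exactly the folded sinc kernel, and use this as a consistency check.
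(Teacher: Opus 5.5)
Your proposal is correct and follows the paper's proof essentially step for step: you write $\psi$ as a cosine transform of an $L^2(0,R)$ coefficient, extend it evenly to a function bandlimited to $[-R,R]$, apply the Whittaker--Shannon formula at spacing $\pi/R$, and fold the bilateral series using $\psi(-n\pi/R)=\psi(n\pi/R)$. Your self-contained Fourier-series derivation and your explicit definition of the sector as the range of $\Pi_{\rm phys}$ (which the paper formalizes only afterwards, as $\mathcal{H}^{(b)}_R$) are useful additions; note only that your Parseval identity uses the full-line norm of the even extension, so the half-line norm carries an extra factor $1/2$, which gives the weights $\pi/(2R)$ and $\pi/R$ of Proposition~\ref{prop:nyquist_basis}.
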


\begin{proof}
Let $\phi(k)$ be the spectral coefficient such that $\psi(b) = \sqrt{2/\pi}\int_0^R dk \, \phi(k)\cos(bk)$ for $b \ge 0$. We extend the wavefunction to the entire real line by even parity, defining $\Psi(x) := \psi(|x|)$ for $x \in \mathbb{R}$. Expressing the cosine as complex exponentials, the extended function takes the Fourier form $\Psi(x) = (2\pi)^{-1/2}\int_{-R}^R dq \, \widetilde{\phi}(q) e^{iqx}$, where $\widetilde{\phi}(q) := \phi(|q|)$. Thus, $\Psi(x)$ is bandlimited to $[-R, R]$. By the standard Whittaker-Shannon interpolation formula \cite{Shannon:1949noise}, the whole-line function is exactly reconstructed by
\begin{equation}
    \Psi(x) = \sum_{n=-\infty}^\infty \Psi\!\left(\frac{n\pi}{R}\right)\operatorname{sinc}\!\left(\frac{Rx}{\pi} - n\right).
\end{equation}
Restricting the domain to $x = b \ge 0$ and folding the summation using the even parity constraint $\Psi(-n\pi/R) = \Psi(n\pi/R)$ exactly yields the half-line reconstruction formula \eqref{eq:shannon_half_line}.
\end{proof}

This Shannon reconstruction implies a stronger algebraic structure: the bandlimited geodesic-length sector admits a sampled Hilbert-space formulation. Let $\mathcal{H}^{(b)}_R$ denote the finite-cutoff physical geodesic-length subspace, defined as the range of the physical projector $\Pi_{\rm phys}$:
\begin{equation}
\mathcal{H}^{(b)}_R := \left\{ \psi(b) = \sqrt{\frac{2}{\pi}}\int_0^R dk\,\phi(k)\cos(kb) \ \middle|\ \phi \in L^2([0,R]) \right\} \subset L^2(\mathbb{R}_+) \,.
\end{equation}

\begin{proposition}[Discrete Nyquist basis and Hilbert-space equivalence]
\label{prop:nyquist_basis}
Define the continuous functions $e_n(b) := \sqrt{2/\pi}\int_0^R dk\,\chi_n(k)\cos(kb)$, mapped from the standard $k$-space cosine basis $\chi_0(k) = 1/\sqrt{R}$ and $\chi_n(k) = \sqrt{2/R}\cos(n\pi k/R)$ for $n \ge 1$. The set $\{e_n\}_{n=0}^\infty$ constitutes an orthonormal basis of $\mathcal{H}^{(b)}_R$, given explicitly by
\begin{equation}
e_0(b) = \sqrt{\frac{2R}{\pi}}\operatorname{sinc}\left(\frac{Rb}{\pi}\right) \,, \qquad e_n(b) = \sqrt{\frac{R}{\pi}} \left[ \operatorname{sinc}\left(\frac{Rb}{\pi}-n\right) + \operatorname{sinc}\left(\frac{Rb}{\pi}+n\right) \right] \quad (n \ge 1) \,.
\end{equation}
Every physical state $\psi \in \mathcal{H}^{(b)}_R$ admits the unique expansion $\psi(b) = \sum_{n=0}^\infty a_n e_n(b)$, whose coefficients are the weighted Nyquist samples: $a_0 = \sqrt{\pi/(2R)}\,\psi(0)$ and $a_n = \sqrt{\pi/R}\,\psi(n\pi/R)$ for $n \ge 1$. Consequently, the Hilbert-space norm is determined by the discrete samples via the exact Parseval identity:
\begin{equation}
\|\psi\|^2 = \frac{\pi}{2R}|\psi(0)|^2 + \frac{\pi}{R}\sum_{n=1}^\infty \left| \psi\left(\frac{n\pi}{R}\right) \right|^2 \,.
\end{equation}
The sampling map $U\psi = (a_0, a_1, a_2, \dots)$ defines a unitary isomorphism $\mathcal{H}^{(b)}_R \cong \ell^2(\mathbb{N}_0)$.
\end{proposition}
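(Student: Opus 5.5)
The plan is to transport everything to $k$-space, where the claim reduces to the completeness of the cosine system on $[0,R]$. The map $T:L^2([0,R])\to\mathcal{H}^{(b)}_R$, $T\phi(b)=\sqrt{2/\pi}\int_0^R dk\,\phi(k)\cos(kb)$, is the restriction of the half-line Fourier cosine transform to functions supported in $[0,R]$. By the Plancherel identity underlying the normalization \eqref{eq:ortho_free}--\eqref{eq:complete_free}, it is an isometry from $L^2([0,R])$ (extended by zero) into $L^2(\mathbb{R}_+)$. Its range is $\mathcal{H}^{(b)}_R$ by definition, so $T$ is unitary onto $\mathcal{H}^{(b)}_R$. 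The set $\{\chi_n\}_{n\ge0}$ is the standard orthonormal basis of $L^2([0,R])$: it consists of the Neumann eigenfunctions of $-d^2/dk^2$ on $[0,R]$, and its completeness follows from completeness of the Fourier series of the even $2R$-periodic extension. Hence $e_n=T\chi_n$ is an orthonormal basis of $\mathcal{H}^{(b)}_R$, and the expansion $\psi=\sum a_n e_n$ with $a_n=\langle e_n,\psi\rangle=\langle\chi_n,\phi\rangle$ exists and is unique.

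Next I compute $e_n$ explicitly. For $n=0$ this is immediate: $\sqrt{2/\pi}\,R^{-1/2}\sin(Rb)/b=\sqrt{2R/\pi}\,\operatorname{sinc}(Rb/\pi)$. For $n\ge1$ I use the product-to-sum identity $\cos(n\pi k/R)\cos(kb)=\tfrac12[\cos(k(b-n\pi/R))+\cos(k(b+n\pi/R))]$ and integrate over $[0,R]$. Since $\sin(Rb\mp n\pi)=(-1)^n\sin(Rb)$, each term becomes $R\operatorname{sinc}(Rb/\pi\mp n)$ up to the same sign convention. This produces the stated combination with prefactor $\sqrt{2/\pi}\sqrt{2/R}\cdot\tfrac12\cdot R=\sqrt{R/\pi}$.

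To identify the coefficients as samples, I evaluate $\psi(b_m)$ directly:
\[
\psi(m\pi/R)=\sqrt{2/\pi}\int_0^R dk\,\phi(k)\cos(m\pi k/R).
\]
For $m\ge1$ this equals $\sqrt{2/\pi}\sqrt{R/2}\,\langle\chi_m,\phi\rangle=\sqrt{R/\pi}\,a_m$, and for $m=0$ it equals $\sqrt{2/\pi}\sqrt{R}\,a_0$. These give exactly $a_0=\sqrt{\pi/(2R)}\,\psi(0)$ and $a_n=\sqrt{\pi/R}\,\psi(n\pi/R)$. As a consistency check, substituting them into $\sum a_ne_n$ reproduces Proposition~\ref{prop:shannon_sampling}. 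Parseval in the orthonormal basis $\{e_n\}$ gives $\|\psi\|^2=\sum|a_n|^2$, which is the stated identity. Finally, $U=(\langle e_n,\cdot\rangle)_n$ is unitary onto $\ell^2(\mathbb{N}_0)$: it is isometric by Parseval, and it is surjective because any $\ell^2$ sequence defines the convergent element $\sum a_ne_n$.

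The main obstacle is the rigor of the pointwise sample identification. Point evaluation has to be continuous on $\mathcal{H}^{(b)}_R$. This holds because $\phi\in L^2([0,R])\subset L^1([0,R])$, so $\psi$ is continuous, and Cauchy--Schwarz gives $|\psi(b)|\le\sqrt{2R/\pi}\,\|\phi\|$. The same bound guarantees that the series $\sum a_ne_n$ converges uniformly as well as in $L^2$, so the pointwise identities are legitimate. The remaining subtlety is the isometry of $T$ into $L^2(\mathbb{R}_+)$, which follows from the standard cosine-transform Plancherel theorem.
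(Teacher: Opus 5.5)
Your proposal is correct and follows the paper's proof essentially step for step: unitarity of the restricted cosine transform onto $\mathcal{H}^{(b)}_R$, transport of the orthonormal cosine basis $\{\chi_n\}$ of $L^2([0,R])$, explicit product-to-sum evaluation of $e_n$, and Parseval in the basis $\{e_n\}$. The only variation is in identifying the coefficients: you read $\psi(m\pi/R)$ directly as a cosine Fourier coefficient of $\phi$, whereas the paper evaluates the expansion termwise using the interpolation property $e_n(b_m)=\sqrt{R/\pi}\,\delta_{nm}$, a step that tacitly relies on the pointwise convergence you justify via the bound $|\psi(b)|\le\sqrt{2R/\pi}\,\|\phi\|$.
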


\begin{proof}
The half-line cosine transform $(C_R\phi)(b) := \sqrt{2/\pi}\int_0^R dk\,\phi(k)\cos(kb)$ is the restriction of the standard unitary cosine transform to the bandlimited sector $k\in[0,R]$. Hence, it defines a unitary map $C_R: L^2([0,R]) \to \mathcal{H}^{(b)}_R$. Because $\{\chi_n\}_{n=0}^\infty$ is the standard orthonormal cosine basis of $L^2([0,R])$, their images $e_n = C_R\chi_n$ automatically form an orthonormal basis of $\mathcal{H}^{(b)}_R$.

We evaluate these basis functions explicitly. For the zero-mode $n=0$:
\begin{equation}
e_0(b) = \sqrt{\frac{2}{\pi}}\int_0^R dk\,\frac{1}{\sqrt{R}}\cos(kb) = \sqrt{\frac{2R}{\pi}}\operatorname{sinc}\left(\frac{Rb}{\pi}\right) \,.
\end{equation}
For $n \ge 1$, utilizing the trigonometric identity $\cos\alpha\cos\beta = \frac{1}{2}[\cos(\alpha-\beta) + \cos(\alpha+\beta)]$, we have:
\begin{align}
e_n(b) &= \sqrt{\frac{2}{\pi}}\int_0^R dk\,\sqrt{\frac{2}{R}}\cos\left(\frac{n\pi k}{R}\right)\cos(kb) \nonumber \\
&= \sqrt{\frac{R}{\pi}} \left[ \operatorname{sinc}\left(\frac{Rb}{\pi}-n\right) + \operatorname{sinc}\left(\frac{Rb}{\pi}+n\right) \right] \,.
\end{align}

Evaluating these basis functions on the Nyquist lattice $b_m = m\pi/R$ yields the precise interpolation property. For $n=0$, $e_0(0) = \sqrt{2R/\pi}$ and $e_0(b_m) = 0$ for $m \ge 1$. For the higher modes ($n \ge 1$), the properties of the sinc function ensure $e_n(0) = 0$ and $e_n(b_m) = \sqrt{R/\pi}\,\delta_{nm}$. 

Consequently, evaluating the unique expansion $\psi(b) = \sum_{n=0}^\infty a_n e_n(b)$ at the lattice points directly isolates the coefficients: $\psi(0) = a_0 \sqrt{2R/\pi}$ and $\psi(n\pi/R) = a_n \sqrt{R/\pi}$ for $n \ge 1$. Inverting these relations yields the stated sample weights. Finally, the standard orthonormality condition $\|\psi\|^2 = \sum_{n=0}^\infty |a_n|^2$ implies the sample-space Parseval identity and the unitary isomorphism to $\ell^2(\mathbb{N}_0)$.
\end{proof}

\subsection{Doubled lattice dynamics and band-edge structure}
\label{subsec:doubled_dynamics}

This unitary isomorphism also permits a useful representation-theoretic reformulation of the same finite-cutoff dynamics. Within the geodesic-length sector, the continuous disk amplitude admits a reformulation as a discrete transfer problem on a semi-infinite lattice.

First, we map the continuous branch Hamiltonians to the discrete space. Using the unitary map $U: \mathcal{H}^{(b)}_R \to \ell^2(\mathbb{N}_0)$ defined by the Nyquist basis, we define the discrete branch Hamiltonians:
\begin{equation}
H_\pm^{\rm disc} := U E_\pm(A_N) U^{-1} \,.
\end{equation}
Because $U$ is unitary, this is the same gravitational operator evaluated in a different representation, not a newly introduced microscopic model. The corresponding branch-resolved discrete matrix kernels are given by
\begin{equation}
T^\pm_{mn}(\beta) := \langle e_m | e^{-\beta E_\pm(A_N)} | e_n \rangle = \int_0^R dk \, \chi_m(k) e^{-\beta E_\pm(k)} \chi_n(k) \,.
\end{equation}

\begin{proposition}[Branch-doubled transfer representation on the Nyquist lattice]
\label{prop:doubled_dynamics}
Each branch resolves an independent semigroup evolution on the lattice, satisfying $T^\pm(\beta_1+\beta_2) = T^\pm(\beta_1)T^\pm(\beta_2)$. However, the physical branch-difference kernel $\mathbb{T}_{mn}(\beta) := T^-_{mn}(\beta) - T^+_{mn}(\beta)$ is not a single semigroup. Thus, the finite-cutoff disk sector admits an exact branch-doubled discrete transfer representation on the Nyquist lattice, where the physical amplitude is assembled from two independent branch semigroups.
\end{proposition}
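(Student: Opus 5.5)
The proof has two parts: the branchwise semigroup law, and the failure of that law for the difference kernel $\mathbb{T}$.

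\textbf{Semigroup property of each branch.} I would work in the unitary picture of Proposition~\ref{prop:nyquist_basis}. Since $U$ is unitary and the $e_n$ are images of the orthonormal cosine basis $\chi_n$ of $L^2([0,R])$, one has $T^\pm(\beta)=U e^{-\beta E_\pm(A_N)}\Pi_{\rm phys}U^{-1}$. In the $k$-representation this operator acts on $L^2([0,R])$ as multiplication by $e^{-\beta E_\pm(k)}$, which is bounded because $E_\pm(k)$ are real and continuous on $[0,R]$. The semigroup law $e^{-(\beta_1+\beta_2)E_\pm}=e^{-\beta_1E_\pm}e^{-\beta_2E_\pm}$ holds pointwise. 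To get it at the matrix level, insert the completeness relation $\sum_n \chi_n(k)\chi_n(k')=\delta(k-k')$ on $[0,R]$ into $\sum_n T^\pm_{mn}(\beta_1)T^\pm_{nl}(\beta_2)$. Equivalently, apply Parseval to the $L^2$ functions $\chi_m e^{-\beta_1E_\pm}$ and $e^{-\beta_2E_\pm}\chi_l$. This gives $T^\pm(\beta_1+\beta_2)=T^\pm(\beta_1)T^\pm(\beta_2)$, with the sum converging absolutely, so no domain subtleties arise.

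\textbf{Failure for the difference kernel.} Here $\mathbb{T}(\beta)$ is unitarily equivalent to multiplication by $g_\beta(k):=F_\beta(k)=2e^{-\beta\mathcal{E}_0}\sinh(\beta b(k))$ on $L^2([0,R])$. If $\mathbb{T}(\beta_1+\beta_2)=\mathbb{T}(\beta_1)\mathbb{T}(\beta_2)$ held for all $\beta_1,\beta_2>0$, then, since multiplication operators by continuous functions agree exactly when the functions agree almost everywhere (hence everywhere by continuity), we would need
\begin{equation}
g_{\beta_1+\beta_2}(k)=g_{\beta_1}(k)\,g_{\beta_2}(k)\quad\forall k\in[0,R].
\end{equation}
Two routes refute this. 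The direct one fixes $k<R$ with $b(k)>0$ and notes that $\beta\mapsto g_\beta(k)$ is continuous and positive, so multiplicativity would force $g_\beta(k)=e^{-\beta c(k)}$. But $g_\beta(k)\to0$ as $\beta\to0^+$, whereas an exponential tends to $1$. Explicitly, setting $\beta_1=\beta_2=\beta$ and using $\sinh 2x=2\sinh x\cosh x$ reduces the identity to $\cosh(\beta b)=e^{-\beta\mathcal{E}_0}\sinh(\beta b)$, which fails for small $\beta$. The second route argues at the operator level in the spirit of Theorem~\ref{thm:general_trace_obstruction}. A strongly continuous contraction-type semigroup with $\mathbb{T}(\beta)\to 0$ strongly as $\beta\to0^+$ is impossible: $\mathbb{T}(\beta)=\mathbb{T}(\beta/2^j)^{2^j}$, and the norm bound $\|\mathbb{T}(\beta/2^j)\|\le 2e^{|\mathcal{E}_0|\beta}\sinh(\beta R/\varepsilon\cdot 2^{-j})\to0$ would force $\mathbb{T}(\beta)=0$. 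This contradicts $b\not\equiv0$ on $[0,R)$.

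\textbf{Main obstacle.} The delicate point is making ``not a single semigroup'' precise, since one must exclude any reparametrization or nontrivial generator. I would state it as the failure of the functional equation for some $\beta_1,\beta_2$, which the explicit $\sinh$ doubling computation settles pointwise at a single $k$. I would then transfer this to matrix elements by exhibiting $m=l=0$: the diagonal element $\int_0^R\chi_0^2(\cdot)\,dk$ of both sides differs, because the pointwise inequality has a fixed sign, namely $g_{2\beta}<g_\beta^2$ fails in the direction $g_{2\beta}>g_\beta^2$ for small $\beta$ on a set of positive measure. The final sentence of the statement, that the amplitude is assembled from two independent branch semigroups, then follows from $\mathbb{T}=T^--T^+$ together with the first part.
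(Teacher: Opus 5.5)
Your proof is correct. Its first half coincides with the paper's: the semigroup law for $T^\pm$ comes from unitary conjugation of $e^{-(\beta_1+\beta_2)E_\pm(A_N)}=e^{-\beta_1E_\pm(A_N)}e^{-\beta_2E_\pm(A_N)}$. Your Parseval computation just makes the matrix product $\sum_n T^\pm_{mn}(\beta_1)T^\pm_{nl}(\beta_2)$ explicit.

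The second half takes a different route. The paper's proof of this proposition only asserts that $\mathbb{T}(\beta)$ ``does not generally satisfy'' the multiplication law. The actual check is deferred to the proof of Corollary~\ref{cor:no_single_hamiltonian}, which expands $f_k(\beta_1)f_k(\beta_2)-f_k(\beta_1+\beta_2)$ into four exponentials and notes that it does not vanish identically when $E_+\neq E_-$ (``for generic finite cutoff''). You instead work with $g_\beta=2e^{-\beta\mathcal{E}_0}\sinh(\beta b)$ and give structural arguments:
\begin{itemize}
\item a continuous positive multiplicative function of $\beta$ is an exponential, so it tends to $1$ as $\beta\to0^+$, whereas $g_\beta\to0$ (the same small-$\beta$ mechanism as Theorem~\ref{thm:general_trace_obstruction});
\item alternatively, the dyadic norm bound forces $\mathbb{T}\equiv0$.
\end{itemize}
Your version gives a sign-definite, $k$-uniform statement and hence an explicit failing matrix element, which the paper's pointwise ``does not vanish identically'' does not supply. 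The paper's expansion gives an explicit defect formula for arbitrary $\beta_1,\beta_2$.

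There is one computational slip to fix. After cancelling the common factor $4e^{-2\beta\mathcal{E}_0}\sinh(\beta b)$, the doubling identity reduces to $\cosh(\beta b)=\sinh(\beta b)$, not to $\cosh(\beta b)=e^{-\beta\mathcal{E}_0}\sinh(\beta b)$. Equivalently,
\[
g_{2\beta}(k)-g_\beta(k)^2=2e^{-2\beta\mathcal{E}_0}\bigl(1-e^{-2\beta b(k)}\bigr)>0\qquad(0\le k<R),
\]
which is the paper's four-term difference at $\beta_1=\beta_2=\beta$ once you use $E_-+E_+=2\mathcal{E}_0$. So the law fails for every $\beta>0$, not only for small $\beta$. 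Your garbled sentence about the direction of the inequality should simply read $g_{2\beta}>g_\beta^2$ on $[0,R)$. With that, $\bigl(\mathbb{T}(2\beta)-\mathbb{T}(\beta)^2\bigr)_{00}=R^{-1}\int_0^R\bigl(g_{2\beta}-g_\beta^2\bigr)\,dk>0$ follows immediately.

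The phrase ``strongly continuous contraction-type semigroup'' is also loose, since the family tends to $0$ rather than the identity as $\beta\to0^+$. However, your dyadic argument only uses the semigroup law together with $\|\mathbb{T}(\beta/2^j)\|\to0$, so it stands as written.
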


\begin{proof}
By definition, the discrete branch Hamiltonians are $H_\pm^{\rm disc} = U E_\pm(A_N) U^{-1}$. Therefore, the branch-resolved discrete evolution operators are
\begin{equation}
T^\pm(\beta) = e^{-\beta H_\pm^{\rm disc}} = U e^{-\beta E_\pm(A_N)} U^{-1} \,.
\end{equation}
Because the exponential semigroup property holds in the continuous representation, $e^{-(\beta_1+\beta_2)E_\pm(A_N)} = e^{-\beta_1 E_\pm(A_N)} e^{-\beta_2 E_\pm(A_N)}$, unitary conjugation immediately yields the discrete semigroup rule:
\begin{equation}
T^\pm(\beta_1+\beta_2) = T^\pm(\beta_1)T^\pm(\beta_2) \,.
\end{equation}
In contrast, the physical branch difference $\mathbb{T}(\beta) = T^-(\beta) - T^+(\beta)$ does not generally satisfy this multiplication law. This implies $\mathbb{T}(\beta)$ cannot be generated by the thermal evolution of a single, $\beta$-independent discrete Hamiltonian, but must be represented as the interference of these two separate semigroup transfers.
\end{proof}

In this representation, the unweighted trumpet geometry acts as a localized source. Projecting the geometric state $|\Omega_{\rm geom}\rangle$ onto the discrete basis yields $\Omega_n := \langle e_n | \Omega_{\rm geom} \rangle = \int_0^R dk \, \chi_n(k) \cdot 1$. By the orthogonality of the basis, this evaluates trivially to a single mode:
\begin{equation}
\Omega_n = \sqrt{R} \, \delta_{n0} \,.
\end{equation}
Defining the discrete cap readout vector $C_m := \langle \mathrm{cap} | e_m \rangle$, the exact finite-cutoff disk amplitude is recast as a discrete transition sourced from the zeroth Nyquist mode:
\begin{equation}
Z_\varepsilon(\beta) = \sum_{m,n=0}^\infty C_m \mathbb{T}_{mn}(\beta) \Omega_n = \sqrt{R} \sum_{m=0}^\infty C_m \mathbb{T}_{m0}(\beta) \,.
\end{equation}
We emphasize again that this doubled lattice dynamics is an exact representation of the same finite-cutoff disk sector, not an independent microscopic lattice ontology. This doubled description admits a sharp converse statement: the physical branch-difference kernel cannot be collapsed into the thermal semigroup of any single $\beta$-independent Hamiltonian.
\begin{corollary}[No single $\beta$-independent effective Hamiltonian]
\label{cor:no_single_hamiltonian}
There does not exist a $\beta$-independent operator $H_{\rm eff}$ on the sampled geodesic Hilbert space such that
\begin{equation}
\mathbb{T}(\beta)=e^{-\beta H_{\rm eff}}
\qquad \text{for all } \beta>0,
\end{equation}
where $\mathbb{T}(\beta)=T^-(\beta)-T^+(\beta)$ is the exact physical branch-difference kernel. Equivalently, the finite-cutoff disk sector is irreducibly branch-doubled: its exact open-channel evolution cannot be collapsed into the thermal semigroup of a single $\beta$-independent Hamiltonian.
\end{corollary}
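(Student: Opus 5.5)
The plan is to argue by contradiction through the semigroup law, working in the spectral (momentum) representation where everything is diagonal. Suppose some $\beta$-independent $H_{\rm eff}$ satisfies $\mathbb{T}(\beta)=e^{-\beta H_{\rm eff}}$ for all $\beta>0$. Whatever class $H_{\rm eff}$ belongs to, it suffices that the exponential obeys $e^{-(\beta_1+\beta_2)H_{\rm eff}}=e^{-\beta_1 H_{\rm eff}}e^{-\beta_2 H_{\rm eff}}$. In particular we would need
\begin{equation}
\mathbb{T}(2\beta)=\mathbb{T}(\beta)^2 \qquad \forall\beta>0 .
\end{equation}
So the whole task reduces to showing that this identity fails for at least one $\beta$.

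First, pull everything back through the unitary $U$ of Proposition~\ref{prop:nyquist_basis} and the unitary cosine transform $C_R$. There $\mathbb{T}(\beta)=U\,F_\beta(A_N)\,U^{-1}$ becomes the multiplication operator on $L^2([0,R])$ with symbol
\begin{equation}
m_\beta(k)=e^{-\beta E_-(k)}-e^{-\beta E_+(k)}=2e^{-\beta\mathcal{E}_0}\sinh\big(\beta b(k)\big),
\end{equation}
using $E_\pm=\mathcal{E}_0\pm b(k)$ from Section~\ref{sec:contour_projector}. This step only uses that $e_m=C_R\chi_m$ and that $T^\pm_{mn}$ are matrix elements of multiplication by $e^{-\beta E_\pm(k)}$. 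Products and equality of such operators then reduce to products and a.e.-equality of symbols.

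Next, compare the two symbols directly. By the double-angle identity,
\begin{equation}
m_\beta(k)^2=4e^{-2\beta\mathcal{E}_0}\sinh^2(\beta b)
\end{equation}
while
\begin{equation}
m_{2\beta}(k)=4e^{-2\beta\mathcal{E}_0}\sinh(\beta b)\cosh(\beta b).
\end{equation}
Their ratio is therefore $\tanh(\beta b(k))<1$ wherever $b(k)>0$. Since $b(k)=\sqrt{\phi_r^2-\varepsilon^2k^2}/\varepsilon^2>0$ on $[0,R)$, a set of full measure, the two symbols differ on a set of positive measure. Hence $\mathbb{T}(2\beta)\neq\mathbb{T}(\beta)^2$ for every $\beta>0$, which is the desired contradiction. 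This is the quantitative version of the qualitative remark in the proof of Proposition~\ref{prop:doubled_dynamics}. It applies to arbitrary $H_{\rm eff}$, not only lower-bounded self-adjoint ones, so it goes beyond Theorem~\ref{thm:general_trace_obstruction}.

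As an independent cross-check I would also record the small-$\beta$ behaviour. Uniformly in $k$ we have $|m_\beta(k)|\le 2\beta\,b_{\max}e^{\beta|\mathcal{E}_0|}$, so $\|\mathbb{T}(\beta)\|\to0$ as $\beta\to0^+$. By contrast, any $e^{-\beta H}$ with $H$ bounded, or with $H$ self-adjoint and lower-bounded, tends to $\mathbb{I}$ in norm or strongly, respectively. This mirrors the mechanism behind $\lim_{\beta\to0^+}Z(\beta)=0$ in Theorem~\ref{thm:general_trace_obstruction}.

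The main obstacle is only interpretive: making precise what "$e^{-\beta H_{\rm eff}}$" means for a general operator. I would handle this by stating that the argument uses nothing beyond the semigroup law. That law holds for any reasonable definition (bounded exponential, $C_0$-semigroup generator, or functional calculus), so the symbol comparison above closes the proof without further case analysis.
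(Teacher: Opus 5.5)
Your proposal is correct and follows essentially the same route as the paper. Both assume $\mathbb{T}(\beta)=e^{-\beta H_{\rm eff}}$, invoke the semigroup law, and show that it fails for the diagonal $k$-space symbol $e^{-\beta E_-(k)}-e^{-\beta E_+(k)}$. Your version takes $\beta_1=\beta_2$ and uses the explicit ratio $m_\beta(k)^2/m_{2\beta}(k)=\tanh(\beta b(k))<1$ on $[0,R)$. This is a slightly sharper form of the paper's observation that $f_k(\beta_1)f_k(\beta_2)-f_k(\beta_1+\beta_2)$ does not vanish identically, and it makes the step from pointwise symbols to an operator inequality explicit.
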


\begin{proof}
If such an operator $H_{\rm eff}$ existed, the family $\mathbb{T}(\beta)=e^{-\beta H_{\rm eff}}$ would necessarily satisfy the semigroup law
\begin{equation}
\mathbb{T}(\beta_1+\beta_2)=\mathbb{T}(\beta_1)\mathbb{T}(\beta_2)
\qquad \forall \beta_1,\beta_2>0.
\end{equation}
However, from Proposition~\ref{prop:doubled_dynamics}, the exact physical kernel is
\begin{equation}
\mathbb{T}(\beta)=T^-(\beta)-T^+(\beta)
=U\Big(e^{-\beta E_-(A_N)}-e^{-\beta E_+(A_N)}\Big)U^{-1},
\end{equation}
with each branch separately satisfying the semigroup property, but their difference not doing so.

This failure can be seen explicitly already in the $k$-representation. For each spectral value $k\in[0,R]$, define
\begin{equation}
f_k(\beta):=e^{-\beta E_-(k)}-e^{-\beta E_+(k)}.
\end{equation}
Then
\begin{align}
f_k(\beta_1)f_k(\beta_2)
&=\big(e^{-\beta_1E_-(k)}-e^{-\beta_1E_+(k)}\big)
  \big(e^{-\beta_2E_-(k)}-e^{-\beta_2E_+(k)}\big) \\
&= e^{-(\beta_1+\beta_2)E_-(k)}
 + e^{-(\beta_1+\beta_2)E_+(k)}
 - e^{-\beta_1E_-(k)-\beta_2E_+(k)}
 - e^{-\beta_1E_+(k)-\beta_2E_-(k)} .
\end{align}
By contrast,
\begin{equation}
f_k(\beta_1+\beta_2)
= e^{-(\beta_1+\beta_2)E_-(k)}
 - e^{-(\beta_1+\beta_2)E_+(k)} .
\end{equation}
Subtracting the two expressions gives
\begin{align}
f_k(\beta_1)f_k(\beta_2)-f_k(\beta_1+\beta_2)
&= 2e^{-(\beta_1+\beta_2)E_+(k)}
 - e^{-\beta_1E_-(k)-\beta_2E_+(k)}
 - e^{-\beta_1E_+(k)-\beta_2E_-(k)} .
\end{align}
For generic finite cutoff, one has $E_+(k)\neq E_-(k)$ on $0\le k<R$, so this difference does not vanish identically. Therefore $f_k(\beta)$ is not multiplicative in $\beta$, hence $\mathbb{T}(\beta)$ cannot be a single semigroup. This proves that no $\beta$-independent $H_{\rm eff}$ exists.
\end{proof}
In this sense, the failure of a single $\beta$-independent generator is not a merely algebraic restatement of the branch difference. It reflects the fact that the finite-cutoff disk sector is not organized by an ordinary thermal trace, but by an interference observable assembled from branch-resolved geometric data.

A complementary structural consequence concerns the behavior at the momentum band edge $k \to R^-$.

\begin{corollary}[Universal square-root edge collision]
\label{cor:edge_collision}
The spectral gap between the two energy branches closes exactly as a square root at the finite-cutoff band edge. Using the cutoff relation $R = \phi_r/\varepsilon$, the exact branch difference is $\Delta E(k) = E_+(k) - E_-(k) = \frac{2}{\varepsilon}\sqrt{R^2 - k^2}$. Approaching the band edge $k \to R^-$, this exhibits the universal threshold scaling:
\begin{equation}
\Delta E(k) \sim \frac{2}{\varepsilon}\sqrt{2R(R-k)} \,.
\end{equation}
\end{corollary}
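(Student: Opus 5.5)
The plan is a direct algebraic computation from the branch formulas \eqref{eq:E_pm_solutions}, followed by a first-order expansion at the band edge. First I will write the gap. With $E_\pm(k)=\mathcal{E}_0\pm b(k)$ as in section~\ref{sec:contour_projector}, the gap is
\begin{equation}
\Delta E(k)=E_+(k)-E_-(k)=2b(k)=\frac{2\sqrt{\phi_r^2-\varepsilon^2k^2}}{\varepsilon^2}.
\end{equation}
Next I will pull $\varepsilon^2$ out of the radicand and use $\phi_r=\varepsilon R$ from \eqref{eq:R_definition}:
\begin{equation}
\sqrt{\phi_r^2-\varepsilon^2k^2}=\varepsilon\sqrt{R^2-k^2}.
\end{equation}
Since $\varepsilon>0$, no sign ambiguity arises, and this gives the exact form $\Delta E(k)=\frac{2}{\varepsilon}\sqrt{R^2-k^2}$ on $0\le k\le R$.

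For the threshold behaviour, I will factor $R^2-k^2=(R-k)(R+k)$, so that
\begin{equation}
\Delta E(k)=\frac{2}{\varepsilon}\sqrt{R-k}\,\sqrt{R+k}.
\end{equation}
The factor $\sqrt{R+k}$ is smooth and nonvanishing near $k=R$. I will expand it as $\sqrt{R+k}=\sqrt{2R}\,\big(1+\mathcal{O}((R-k)/R)\big)$ as $k\to R^-$. This yields
\begin{equation}
\Delta E(k)=\frac{2}{\varepsilon}\sqrt{2R(R-k)}\,\big(1+\mathcal{O}(R-k)\big),
\end{equation}
which is the claimed asymptotic relation, with the ratio tending to $1$.

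To justify the word "universal," I will observe that the square-root exponent comes only from the simple zero of the discriminant $\phi_r^2-\varepsilon^2k^2$ at $k=R$. That zero is simple because the derivative of the discriminant there, $-2\varepsilon^2R$, is nonzero. Any two branches of a quadratic spectral curve that meet at a simple branch point separate with this same square-root law.

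There is no genuine obstacle here. The only care needed is to keep $k<R$ so the square root stays real, and to state the asymptotic relation as a ratio tending to $1$ rather than as an equality.
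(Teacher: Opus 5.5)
Your proposal is correct and follows essentially the same route as the paper: you compute $E_+-E_-=2b(k)$ from the explicit branch formulas, substitute $\phi_r=\varepsilon R$ to get $\frac{2}{\varepsilon}\sqrt{R^2-k^2}$, and factor $R^2-k^2=(R-k)(R+k)$ with $R+k\to 2R$ to obtain the square-root threshold law. Your remark that the discriminant has a simple zero at $k=R$, since its derivative there is $-2\varepsilon^2 R\neq 0$, is a sound extra justification for the word ``universal,'' which the paper asserts without argument.
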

The finite cutoff therefore organizes the continuous spectrum into a closed two-sheeted structure, characterized by a universal square-root edge closing.
\begin{remark}[$\beta$-dependent effective generator and edge singularity]
\label{rem:beta_dependent_generator}
Although Corollary~\ref{cor:no_single_hamiltonian} rules out any single $\beta$-independent effective Hamiltonian, one may formally rewrite the branch-difference weight as
\begin{equation}
e^{-\beta E_-(k)}-e^{-\beta E_+(k)} = e^{-\beta H_{\rm eff}(k;\beta)},
\end{equation}
with
\begin{equation}
H_{\rm eff}(k;\beta)
=
-\frac{1}{\beta}\log\Big(e^{-\beta E_-(k)}-e^{-\beta E_+(k)}\Big)
=
\mathcal{E}_0-\frac{1}{\beta}\log\big(2\sinh(\beta b(k))\big).
\end{equation}
This effective generator is explicitly thermal, namely $\beta$-dependent, and becomes singular at the band edge. Indeed, using
\begin{equation}
b(k)\sim \frac{1}{\varepsilon}\sqrt{2R(R-k)}
\qquad (k\to R^-),
\end{equation}
one finds
\begin{equation}
H_{\rm eff}(k;\beta)
\sim
\mathcal{E}_0-\frac{1}{\beta}\log\left(\frac{2\beta}{\varepsilon}\sqrt{2R(R-k)}\right),
\qquad k\to R^-,
\end{equation}
which diverges logarithmically. Thus collapsing the exact branch-doubled evolution into a single effective weight necessarily produces a thermal, edge-singular generator.
\end{remark}

\section{Discussion}
\label{sec:discussion}

In this work, we have constructed an open-channel operator representation of the finite-cutoff disk partition function in JT gravity. The main result is not a new closed-channel amplitude, but a completion of the previously missing open-channel operator structures. The logic is intentionally split between imported finite-cutoff geometric input and structures derived within the parity-even auxiliary problem; once these are combined, the known disk amplitude closes into a boundary-state matrix element.

Relative to \cite{Griguolo:2025fin}, the present paper should therefore be read as an operator-level completion of the open-channel side. The exact disk amplitude, the rigid length--momentum kernel, and the disk--trumpet gluing input are not derived here anew. What is fixed here is the Neumann vacuum sector of the pure-gravity, orientation-even smooth quotient, the corresponding generalized eigenbasis and normalization, the branch-resolved spectral functional, and the operator packaging that reproduces the known disk answer once the geometric input is supplied.

The bandlimited geodesic sector then has several exact consequences. It admits a sampled Hilbert-space representation, a Nyquist basis, and a branch-doubled discrete transfer representation. These are best understood as representation-theoretic consequences of the compact spectral support and of the reconstructed branch-resolved evolution. They do not by themselves define a new microscopic lattice theory.

The same care applies to the two no-go statements. Proposition~\ref{prop:bare_trace_nogo} shows only that bare local scattering data are insufficient to reproduce the exact disk measure; the missing ingredient is the cap-gluing input. Corollary~\ref{cor:jt_trace_obstruction} is stronger, but still precise in scope: the exact compact-support branch-difference disk amplitude is not the ordinary thermal trace of a single lower-bounded self-adjoint $\beta$-independent Hamiltonian.

Finally, the Neumann endpoint condition should be read as the natural and unique smooth endpoint realization of the pure-gravity, orientation-even vacuum sector studied here. More general endpoint data can arise once midpoint defects, end-of-the-world branes, or parity-odd deformations are included.

Several natural directions remain open. On the bulk side, it would be interesting to extend the present operator packaging beyond the pure-gravity, orientation-even vacuum channel and to analyze how the endpoint problem is modified once additional midpoint-localized structure is introduced. On the boundary side, a more direct interpretation of the compact-support branch-difference amplitude in the finite-cutoff or $T\bar T$-deformed theory would require a separate analysis. The present work does not supply that boundary dictionary; rather, it isolates the precise operator-level structure that any such dictionary would have to reproduce.

\appendix
\section{Exact Jost solutions and the Neumann phase shift}
\label{app:exact_jost}

In section~\ref{subsec:bare_trace_nogo}, we argued that the regularized open-channel bare trace fails to reproduce the Plancherel measure. To make this explicit, we construct the exact analytic Jost solutions for the auxiliary Hamiltonian $A_N$ and explicitly extract the momentum dependence of the Neumann phase shift.

The localized open-channel spectral problem \eqref{eq:eigen_A_k} reads
\begin{equation}
\label{eq:app_schrodinger}
\left( -\partial_{\tilde\ell}^2 + \frac{R^2}{\cosh^2(\tilde\ell/2)} \right) f_k(\tilde\ell) = k^2 f_k(\tilde\ell) \,.
\end{equation}
We introduce the algebraic variable $z$ mapped to the half-line $\tilde\ell \in [0, \infty)$:
\begin{equation}
\label{eq:app_z_def}
z = \frac{1 - \tanh(\tilde\ell/2)}{2} \,, \qquad z \in \left(0, \frac{1}{2}\right] \,.
\end{equation}
The spatial infinity $\tilde\ell \to \infty$ maps to $z \to 0$, while the regular origin $\tilde\ell = 0$ maps to $z = 1/2$. The derivatives transform as $\partial_{\tilde\ell} = -z(1-z)\partial_z$. Applying this operator twice, the differential equation \eqref{eq:app_schrodinger} transforms into
\begin{equation}
\label{eq:app_z_eq}
z(1-z) \partial_z^2 f_k + (1-2z) \partial_z f_k + \left( \frac{k^2}{z(1-z)} - 4R^2 \right) f_k = 0 \,.
\end{equation}
To strip the singular asymptotic behavior, we factorize the solution as $f_k(z) = z^{-ik}(1-z)^{-ik} F(z)$. Substituting this ansatz into \eqref{eq:app_z_eq} exactly cancels the boundary singularities and yields the standard hypergeometric differential equation for $F(z)$:
\begin{equation}
\label{eq:app_hyper_eq}
z(1-z) F'' + \left[ (1-2ik) - (2-4ik)z \right] F' - \left( -4k^2 - 2ik + 4R^2 \right) F = 0 \,.
\end{equation}
Comparing this with the canonical form $z(1-z)F'' + [c - (a+b+1)z]F' - abF = 0$, we uniquely identify the hypergeometric parameters:
\begin{equation}
\label{eq:app_parameters}
a = \frac{1}{2} - 2ik + \frac{\gamma}{2} \,, \quad b = \frac{1}{2} - 2ik - \frac{\gamma}{2} \,, \quad c = 1 - 2ik \,,
\end{equation}
where $\gamma = \sqrt{1 - 16R^2}$. For $R > 1/4$, $\gamma$ is purely imaginary, and we fix the principal branch $\operatorname{Im}(\gamma) > 0$. For real $k$ in the physical band $[0,R]$, this branch convention is held fixed under $k \mapsto -k$, so that
\begin{equation}
a(-k)=a(k)^*, \qquad b(-k)=b(k)^*, \qquad c(-k)=c(k)^* .
\end{equation}
Accordingly, the incoming solution at momentum $-k$ is obtained by complex conjugation under the same branch convention. The exact regular Jost solution, normalized to $f_k(\tilde\ell) \sim e^{ik\tilde\ell}$ as $z \to 0$, is thus given by
\begin{equation}
\label{eq:app_exact_jost}
f_k(z) = z^{-ik}(1-z)^{-ik} \,_2F_1(a, b; c; z) \,.
\end{equation}

To compute the Neumann phase shift $\delta_N(k)$, we must evaluate the spatial derivative of $f_k$ at the origin $\tilde\ell = 0$, which corresponds to $z = 1/2$. Using $\partial_{\tilde\ell} = -z(1-z)\partial_z$, we have $\partial_{\tilde\ell} f_k \big|_{\tilde\ell=0} = -\frac{1}{4} f_k'(1/2)$. Differentiating \eqref{eq:app_exact_jost} yields two terms: the derivative of the prefactor and the derivative of the hypergeometric function. The derivative of the prefactor $P(z) = z^{-ik}(1-z)^{-ik}$ is evaluated as
\begin{equation}
P'(z) = -ik \left( \frac{1}{z} - \frac{1}{1-z} \right) z^{-ik}(1-z)^{-ik} \,.
\end{equation}
Evaluated at $z = 1/2$, the parenthetical term exactly cancels ($2 - 2 = 0$), so $P'(1/2) \equiv 0$. Thus, only the derivative of the hypergeometric series survives:
\begin{equation}
\label{eq:app_fk_prime}
f_k'(1/2) = P(1/2) \frac{d}{dz} \,_2F_1(a, b; c; z) \Big|_{z=1/2} = 4^{ik} \frac{ab}{c} \,_2F_1\left(a+1, b+1; c+1; \frac{1}{2}\right) \,.
\end{equation}
Crucially, our parameters satisfy $c = (a+b+1)/2$. This enables the exact evaluation of the hypergeometric series at $z=1/2$ using Gauss's second summation theorem. By analytic continuation to complex parameters, the contiguous relation \cite{DLMF} yields
\begin{equation}
\label{eq:app_gauss_sum}
f_k'(1/2) = 4^{ik} 2\sqrt{\pi} ab \frac{\Gamma(1-2ik)}{\Gamma(\frac{a}{2}+1) \Gamma(\frac{b}{2}+1)} \,.
\end{equation}
The Neumann reflection amplitude is defined by the ratio of the incoming and outgoing derivatives, $e^{2i\delta_N(k)} = -f_{-k}'(1/2) / f_k'(1/2)$. Taking the complex conjugate for $-k$, the exact phase shift reduces to the argument of the Gamma functions:
\begin{equation}
\label{eq:app_phase_shift}
\delta_N(k) = -2k \ln 2 + \operatorname{Im} \log \Gamma(1+2ik) - \operatorname{Im} \log \Gamma\Big(\frac{a^*}{2}+1\Big) - \operatorname{Im} \log \Gamma\Big(\frac{b^*}{2}+1\Big) + \operatorname{Im} \log(a^* b^*) \,.
\end{equation}

\begin{lemma}[Asymptotic growth of the Neumann phase shift]
\label{lemma:phase_shift_growth}
For the potential $V(\tilde\ell) = R^2 \cosh^{-2}(\tilde\ell/2)$, the exact momentum derivative of the Neumann phase shift grows at most logarithmically at large momentum: $\delta'_N(k) = \mathcal{O}(\ln k)$ as $k \to \infty$.
\end{lemma}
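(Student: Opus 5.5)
We start from the closed-form phase shift \eqref{eq:app_phase_shift} and differentiate it term by term in $k$. Every term is smooth in $k$ for large real $k$. The linear term $-2k\ln 2$ has constant derivative $-2\ln 2$. For the Gamma-function terms we use the identity already recorded in the proof of Proposition~\ref{prop:bare_trace_nogo}, $\frac{d}{dk}\operatorname{Im}\log\Gamma(z(k)) = \operatorname{Im}\big[z'(k)\,\psi(z(k))\big]$, with $\psi$ the digamma function.

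The parameters depend on $k$ affinely. From \eqref{eq:app_parameters}, $a^*/2 + 1 = \tfrac54 + ik + \gamma^*/4$ and $b^*/2 + 1 = \tfrac54 + ik - \gamma^*/4$. Here $\gamma$ is a fixed, $k$-independent constant (purely imaginary for $R>1/4$). So the three Gamma arguments have the form $z = \alpha + i c k$, with $c\in\{2,1,1\}$ and fixed complex $\alpha$. Each derivative therefore equals $\Re[c\,\psi(\alpha+ick)]$.

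For $k\to\infty$ the argument tends to infinity along a direction asymptotic to the imaginary axis, so $|\arg z|\to\pi/2<\pi$. There the Stirling expansion $\psi(z)=\ln z-\tfrac{1}{2z}+\mathcal O(z^{-2})$ holds uniformly in the sector. This gives $\Re\,\psi(\alpha+ick)=\ln|ck|+\mathcal O(k^{-1})$, so each Gamma term contributes $\mathcal O(\ln k)$.

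The main obstacle is the remaining term, $\operatorname{Im}\log(a^*b^*)$, together with possible poles or branch jumps of $\log\Gamma$. Concretely:

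\begin{itemize}
\item \emph{Branch of $\log(a^*b^*)$.} We fix a continuous branch along the $k$-ray. Then $\frac{d}{dk}\operatorname{Im}\log(a^*b^*)=\operatorname{Im}\big[(a^*)'/a^*+(b^*)'/b^*\big]$. Since $a^*,b^*\sim 2ik$ with $(a^*)'=(b^*)'=2i$, this derivative is $\mathcal O(k^{-1})$.
\item \emph{Zeros and poles.} The points $a^*b^*=0$ and the poles of $\Gamma$ occur only when $\tfrac54\pm\gamma^*/4+ik$ hits a non-positive integer. The real part of this quantity is $\tfrac54$ for imaginary $\gamma$, so it never hits one. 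For $R\le 1/4$ the real part is at least $1$, so again there is no zero. Hence all logarithms are analytic for large $k$, and the phase shift is differentiable there.
\end{itemize}

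Summing the contributions gives $\delta_N'(k) = 2\ln(2k)+2\ln k+\mathcal O(1) = \mathcal O(\ln k)$. The signs here are irrelevant: only the upper bound is claimed.
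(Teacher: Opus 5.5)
Your proposal is correct and follows essentially the same route as the paper's proof: term-by-term differentiation of the closed-form phase shift, the identity $\frac{d}{dk}\operatorname{Im}\log\Gamma(\alpha+ick)=c\,\Re\psi(\alpha+ick)$ with $\psi(z)\sim\ln z$ in $|\arg z|<\pi$, and an $\mathcal{O}(k^{-1})$ bound on $\operatorname{Im}\log(a^*b^*)$; your branch and pole checks add rigor the paper leaves implicit, though $a^*b^*=0$ corresponds to $a^*/2+1=1$ or $b^*/2+1=1$ rather than to a non-positive integer, and for large $k$ the growing imaginary part rules this out anyway. Your final line should be stated as a bound on $|\delta_N'(k)|$ rather than an equality, because with the correct signs the logarithms cancel, $-2\ln 2+2\ln(2k)-\ln k-\ln k=0$, so your own expansions actually give the much stronger $\delta_N'(k)=\mathcal{O}(k^{-1})$.
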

\begin{proof}
As $k \to \infty$, the parameters scale as $a \sim -2ik$ and $b \sim -2ik$. The momentum derivative of the imaginary log-Gamma terms is determined by the real part of the digamma function, $\frac{d}{dk}\operatorname{Im}\log\Gamma(\alpha+i c k) = \Re\big[c \, \psi(\alpha+i c k)\big]$. Using the standard asymptotic expansion $\psi(z) \sim \ln z$ for large arguments $|\operatorname{Im}(z)| \to \infty$ within $|\arg z| < \pi$, the derivative of each Gamma function term in \eqref{eq:app_phase_shift} contributes at most an $\mathcal{O}(\ln k)$ growth. The polynomial term $\operatorname{Im}\log(a^*b^*)$ yields a subleading $\mathcal{O}(k^{-1})$ contribution. Summing these bounds establishes the logarithmic bound.
\end{proof}
This lemma provides the analytic foundation for the no-go theorem (Proposition~\ref{prop:bare_trace_nogo}) established in the main text.

\bibliographystyle{JHEP}
\bibliography{bib.bib}

@article{Teitelboim:1983ux,
  author       = {Teitelboim, Claudio},
  title        = {Gravitation and Hamiltonian Structure in Two Space-Time Dimensions},
  journal      = {Physics Letters B},
  volume       = {126},
  number       = {1-2},
  pages        = {41--45},
  year         = {1983},
  doi          = {10.1016/0370-2693(83)90012-6}
}

@article{Jackiw:1985je,
  author       = {Jackiw, R.},
  title        = {Lower Dimensional Gravity},
  journal      = {Nuclear Physics B},
  volume       = {252},
  pages        = {343--356},
  year         = {1985},
  doi          = {10.1016/0550-3213(85)90448-1}
}

@article{Maldacena:2016upp,
  author       = {Maldacena, Juan and Stanford, Douglas and Yang, Zhenbin},
  title        = {Conformal symmetry and its breaking in two dimensional Nearly Anti-de-Sitter space},
  journal      = {Progress of Theoretical and Experimental Physics},
  volume       = {2016},
  number       = {12},
  pages        = {12C104},
  year         = {2016},
  doi          = {10.1093/ptep/ptw124},
  eprint       = {1606.01857},
  archivePrefix= {arXiv},
  primaryClass = {hep-th}
}

@article{Jensen:2016pah,
  author       = {Jensen, Kristan},
  title        = {Chaos in {AdS}$_2$ Holography},
  journal      = {Physical Review Letters},
  volume       = {117},
  number       = {11},
  pages        = {111601},
  year         = {2016},
  doi          = {10.1103/PhysRevLett.117.111601},
  eprint       = {1605.06098},
  archivePrefix= {arXiv},
  primaryClass = {hep-th}
}

@article{Engelsoy:2016xyb,
  author       = {Engels{"o}y, Julius and Mertens, Thomas G. and Verlinde, Herman},
  title        = {An Investigation of {AdS}$_2$ Backreaction and Holography},
  journal      = {Journal of High Energy Physics},
  volume       = {07},
  pages        = {139},
  year         = {2016},
  doi          = {10.1007/JHEP07(2016)139},
  eprint       = {1606.03438},
  archivePrefix= {arXiv},
  primaryClass = {hep-th}
}

@article{Maldacena:2016hyu,
  author       = {Maldacena, Juan and Stanford, Douglas},
  title        = {Remarks on the Sachdev-Ye-Kitaev model},
  journal      = {Physical Review D},
  volume       = {94},
  number       = {10},
  pages        = {106002},
  year         = {2016},
  doi          = {10.1103/PhysRevD.94.106002},
  eprint       = {1604.07818},
  archivePrefix= {arXiv},
  primaryClass = {hep-th}
}

@article{Stanford:2017thb,
  author       = {Stanford, Douglas and Witten, Edward},
  title        = {Fermionic Localization of the Schwarzian Theory},
  journal      = {Journal of High Energy Physics},
  volume       = {10},
  pages        = {008},
  year         = {2017},
  doi          = {10.1007/JHEP10(2017)008},
  eprint       = {1703.04612},
  archivePrefix= {arXiv},
  primaryClass = {hep-th}
}

@article{Mertens:2017mtv,
  author       = {Mertens, Thomas G. and Turiaci, Gustavo J. and Verlinde, Herman L.},
  title        = {Solving the Schwarzian via the Conformal Bootstrap},
  journal      = {Journal of High Energy Physics},
  volume       = {08},
  pages        = {136},
  year         = {2017},
  doi          = {10.1007/JHEP08(2017)136},
  eprint       = {1705.08408},
  archivePrefix= {arXiv},
  primaryClass = {hep-th}
}

@article{Iliesiu:2019xuh,
  author       = {Iliesiu, Luca V. and Pufu, Silviu S. and Verlinde, Herman and Wang, Yifan},
  title        = {An exact quantization of {Jackiw-Teitelboim} gravity},
  journal      = {Journal of High Energy Physics},
  volume       = {11},
  pages        = {091},
  year         = {2019},
  doi          = {10.1007/JHEP11(2019)091},
  eprint       = {1905.02726},
  archivePrefix= {arXiv},
  primaryClass = {hep-th}
}

@article{Saad:2019lba,
  author       = {Saad, Phil and Shenker, Stephen H. and Stanford, Douglas},
  title        = {{JT} gravity as a matrix integral},
  eprint       = {1903.11115},
  archivePrefix= {arXiv},
  primaryClass = {hep-th},
  year         = {2019},
  note         = {Preprint}
}

@article{Gross:2019ach,
  author       = {Gross, David J. and Kruthoff, Jorrit and Rolph, Andrew and Shaghoulian, Edgar},
  title        = {{$T\bar T$} in {AdS}$_2$ and Quantum Mechanics},
  journal      = {Physical Review D},
  volume       = {101},
  number       = {2},
  pages        = {026011},
  year         = {2020},
  doi          = {10.1103/PhysRevD.101.026011},
  eprint       = {1907.04873},
  archivePrefix= {arXiv},
  primaryClass = {hep-th}
}

@article{Iliesiu:2020zld,
  author       = {Iliesiu, Luca V. and Kruthoff, Jorrit and Turiaci, Gustavo J. and Verlinde, Herman},
  title        = {{JT} gravity at finite cutoff},
  journal      = {SciPost Physics},
  volume       = {9},
  number       = {2},
  pages        = {023},
  year         = {2020},
  doi          = {10.21468/SciPostPhys.9.2.023},
  eprint       = {2004.07242},
  archivePrefix= {arXiv},
  primaryClass = {hep-th}
}

@article{Stanford:2020qhm,
  author       = {Stanford, Douglas and Yang, Zhenbin},
  title        = {Finite-cutoff {JT} gravity and self-avoiding loops},
  eprint       = {2004.08005},
  archivePrefix= {arXiv},
  primaryClass = {hep-th},
  year         = {2020},
  note         = {Preprint}
}

@article{Goel:2020yxl,
  author       = {Goel, Akash and Iliesiu, Luca V. and Kruthoff, Jorrit and Yang, Zhenbin},
  title        = {Classifying boundary conditions in {JT} gravity: from energy-branes to {$\alpha$}-branes},
  journal      = {Journal of High Energy Physics},
  volume       = {04},
  pages        = {069},
  year         = {2021},
  doi          = {10.1007/JHEP04(2021)069},
  eprint       = {2010.12592},
  archivePrefix= {arXiv},
  primaryClass = {hep-th}
}

@article{Ebert:2022hjm,
  author       = {Ebert, Sebastian and Ferko, Christopher and Sun, Hui-Yu and Sun, Zhengdi},
  title        = {{$T\bar T$} in {JT} Gravity and {BF} Gauge Theory},
  journal      = {SciPost Physics},
  volume       = {13},
  number       = {4},
  pages        = {096},
  year         = {2022},
  doi          = {10.21468/SciPostPhys.13.4.096},
  eprint       = {2205.07817},
  archivePrefix= {arXiv},
  primaryClass = {hep-th}
}

@article{Lin:2022rbf,
  author       = {Lin, Henry W.},
  title        = {The bulk Hilbert space of double scaled {SYK}},
  journal      = {Journal of High Energy Physics},
  volume       = {11},
  pages        = {060},
  year         = {2022},
  doi          = {10.1007/JHEP11(2022)060},
  eprint       = {2208.07032},
  archivePrefix= {arXiv},
  primaryClass = {hep-th}
}

@article{Blommaert:2024whf,
  author       = {Blommaert, Andreas and Mertens, Thomas G. and Papalini, Jacopo},
  title        = {The dilaton gravity hologram of double-scaled {SYK}},
  journal      = {Journal of High Energy Physics},
  volume       = {06},
  pages        = {050},
  year         = {2025},
  doi          = {10.1007/JHEP06(2025)050},
  eprint       = {2404.03535},
  archivePrefix= {arXiv},
  primaryClass = {hep-th}
}

@article{Griguolo:2025fin,
  author       = {Griguolo, Luca and Papalini, Jacopo and Russo, Lorenzo and Seminara, Domenico},
  title        = {A new perspective on dilaton gravity at finite cutoff},
  eprint       = {2512.21774},
  archivePrefix= {arXiv},
  primaryClass = {hep-th},
  year         = {2025},
  note         = {Preprint}
}

@article{Boruch:2024xgh,
  author       = {Boruch, Jan and Iliesiu, Luca V. and Lin, Guanda and Yan, Cynthia},
  title        = {How the Hilbert space of two-sided black holes factorises},
  journal      = {Journal of High Energy Physics},
  volume       = {06},
  pages        = {092},
  year         = {2025},
  doi          = {10.1007/JHEP06(2025)092},
  eprint       = {2406.04396},
  archivePrefix= {arXiv},
  primaryClass = {hep-th}
}

@article{Miyaji:2024eua,
  author       = {Miyaji, Masamichi},
  title        = {Non-perturbative discrete spectrum of interior length and timeshift in two-sided black hole},
  journal      = {Journal of High Energy Physics},
  volume       = {04},
  pages        = {190},
  year         = {2025},
  doi          = {10.1007/JHEP04(2025)190},
  eprint       = {2410.20662},
  archivePrefix= {arXiv},
  primaryClass = {hep-th}
}

@article{Gross:2019uxi,
  author       = {Gross, David J. and Kruthoff, Jorrit and Rolph, Andrew and Shaghoulian, Edgar},
  title        = {Hamiltonian deformations in quantum mechanics, {$T\bar T$}, and the {SYK} model},
  journal      = {Physical Review D},
  volume       = {102},
  number       = {4},
  pages        = {046019},
  year         = {2020},
  doi          = {10.1103/PhysRevD.102.046019},
  eprint       = {1912.06132},
  archivePrefix= {arXiv},
  primaryClass = {hep-th}
}

@misc{Kitaev:2015talks,
  author       = {Kitaev, Alexei},
  title        = {A simple model of quantum holography},
  howpublished = {KITP program ``Entanglement in Strongly-Correlated Quantum Matter'', talks given on April 7 and May 27, 2015},
  year         = {2015},
  note         = {Online talks}
}

@misc{DLMF,
  author       = {{NIST Digital Library of Mathematical Functions}},
  title        = {NIST Digital Library of Mathematical Functions},
  howpublished = {Release 1.2.4 of 2025-03-15},
  year         = {2025},
  note         = {F. W. J. Olver, A. B. Olde Daalhuis, D. W. Lozier et al., eds.}
}

@book{titchmarsh1962eigenfunction,
  title={Eigenfunction Expansions Associated with Second-order Differential Equations},
  author={Titchmarsh, Edward Charles},
  volume={1},
  year={1962},
  publisher={Clarendon Press}
}

@article{Shannon:1949noise,
  author       = {Shannon, Claude E.},
  title        = {Communication in the Presence of Noise},
  journal      = {Proceedings of the IRE},
  volume       = {37},
  pages        = {10--21},
  year         = {1949}
}

@article{Griguolo:2021fcu,
  author       = {Griguolo, Luca and Panerai, Rodolfo and Papalini, Jacopo and Seminara, Domenico},
  title        = {Nonperturbative effects and resurgence in {Jackiw-Teitelboim} gravity at finite cutoff},
  journal      = {Physical Review D},
  volume       = {105},
  number       = {4},
  pages        = {046015},
  year         = {2022},
  doi          = {10.1103/PhysRevD.105.046015},
  eprint       = {2106.01375},
  archivePrefix= {arXiv},
  primaryClass = {hep-th}
}

@article{Iliesiu:2024npbh,
  author = {Iliesiu, Luca V. and Levine, Adam and Lin, Henry W. and Maxfield, Henry and Mezei, Mark},
  title = {On the non-perturbative bulk Hilbert space of JT gravity},
  journal = {JHEP},
  volume = {10},
  pages = {220},
  year = {2024},
  eprint = {2403.08696},
  archivePrefix = {arXiv},
  primaryClass = {hep-th},
  doi = {10.1007/JHEP10(2024)220}
}

@article{Landau:1967density,
  author = {Landau, H. J.},
  title = {Necessary density conditions for sampling and interpolation of certain entire functions},
  journal = {Acta Math.},
  volume = {117},
  pages = {37--52},
  year = {1967},
  doi = {10.1007/BF02395039}
}

@article{SachdevYe:1993,
  author       = {Sachdev, Subir and Ye, Jinwu},
  title        = {Gapless Spin-Fluid Ground State in a Random Quantum Heisenberg Magnet},
  journal      = {Physical Review Letters},
  volume       = {70},
  number       = {21},
  pages        = {3339--3342},
  year         = {1993},
  doi          = {10.1103/PhysRevLett.70.3339},
  eprint       = {cond-mat/9212030},
  archivePrefix= {arXiv}
}

@article{AlmheiriPolchinski:2014,
  author       = {Almheiri, Ahmed and Polchinski, Joseph},
  title        = {Models of {AdS}$_2$ Backreaction and Holography},
  journal      = {Journal of High Energy Physics},
  volume       = {11},
  pages        = {014},
  year         = {2015},
  doi          = {10.1007/JHEP11(2015)014},
  eprint       = {1402.6334},
  archivePrefix= {arXiv},
  primaryClass = {hep-th}
}

@article{Blommaert:2019Fine,
  author       = {Blommaert, Andreas and Mertens, Thomas G. and Verschelde, Henri},
  title        = {Fine Structure of {Jackiw-Teitelboim} Quantum Gravity},
  journal      = {Journal of High Energy Physics},
  volume       = {09},
  pages        = {066},
  year         = {2019},
  doi          = {10.1007/JHEP09(2019)066},
  eprint       = {1812.00918},
  archivePrefix= {arXiv},
  primaryClass = {hep-th}
}

\end{document}